\newtheorem{lemma}{Lemma}
\newcommand{\rev}[1]{{\color{black} #1}}
\begin{document}

\title{Maximising the Utility of Enterprise \\Millimetre-Wave Networks}

\author[1]{N. Facchi} 
\author[2]{F. Gringoli}
\author[3]{P. Patras} 
\affil[1]{Deptartment of Information Engineering and Computer Science, \mbox{University of Trento}}
\affil[2]{Deptartment of Information Engineering, CNIT / University of Brescia}
\affil[3]{School of Informatics, University of Edinburgh}

\date{}

\maketitle

\begin{abstract}
Millimetre-wave (mmWave) technology is a promising candidate for meeting the intensifying demand for ultra fast wireless connectivity, especially in high-end enterprise networks. Very narrow beam forming is mandatory to mitigate the severe attenuation specific to the extremely high frequency (EHF) bands exploited. Simultaneously, this greatly reduces interference, but generates problematic communication blockages. As a consequence, client association control and scheduling in scenarios with densely deployed mmWave access points become particularly challenging, while policies designed for traditional wireless networks remain inappropriate. In this paper we formulate and solve these tasks as utility maximisation problems under different traffic regimes, for the first time in the mmWave context. We specify a set of low-complexity algorithms that capture distinctive terminal deafness and user demand constraints, while providing near-optimal client associations and airtime allocations, despite the problems' inherent NP-completeness. To evaluate our solutions, we develop an \mbox{NS-3} implementation of the IEEE 802.11ad protocol, which we construct upon preliminary 60GHz channel measurements. Simulation results demonstrate that our schemes provide up to 60\% higher throughput as compared to the commonly used signal strength based association policy for mmWave networks, and outperform recently proposed load-balancing oriented solutions, as we accommodate the demand of 33\% more clients in both static and mobile scenarios. 
\end{abstract}

\section{Introduction}

%Users increasingly prefer wireless connectivity whilst high performance applications, including ultra high definition (4K) video, wire-equivalent docking, virtual reality streaming, and low latency data upload/download, are proliferating \cite{intel:2016}. 
Users' predilection for wireless connectivity is increasingly incompatible with the stringent performance requirements of emerging applications, including uncompressed ultra high definition (HD) video, wire-equivalent docking, virtual reality streaming, and low latency data upload/download \cite{intel:2016}.
In response, the industry is exploring the use of license exempt extremely high frequencies (mmWave) in the 60GHz band, for short range multi-gigabit per second wireless communications \cite{Boccardi:2014}. These efforts have already materialised as new standard amendments, e.g. IEEE 802.11ad~\cite{802.11ad}, %and 802.15.3c~\cite{802.15.3c},
recently unveiled WiGig routers~\cite{arstechnica:2016}, 
and business-oriented laptops%\footnote{Arstechnica, ``TP-Link unveils worlds first 802.11ad WiGig router'' (2016), {\ttfamily https://tinyurl.com/tp-link-11ad}} 
~\cite{engadget:2016}.

Different to legacy wireless solutions, mmWave technology leverages vast spectral resources (up to 2GHz-wide channels) currently underutilised. Their potential, however, can only be realised through highly directional digital beamforming, since signals attenuate dramatically in this frequency range \cite{Tie:2012}. %and do not penetrate obstacles \cite{Tie:2012}. 
Forming narrow beams not only mitigates fading, but also reduces interference between adjoining TX/RX pairs. Consequently, links between stations and access points (APs) can be regarded as pseudo-wired and channel access no longer subject to collisions. The caveat is that \emph{associated clients are shut out whenever an AP communicates with anyone of their neighbours}. To ensure all stations are given opportunities to receive and/or transmit packets, the IEEE 802.11ad standard defines a Service Period (SP) based channel access mechanism, though the task of scheduling SPs is deliberately left open to accommodate proprietary implementations~\cite{802.11ad}. 

This problem is further complicated in enterprise wireless networks, including stocks trading offices, broadcasting studios that manipulate raw ultra HD video,\footnote{See BBC IP Studio, http://www.bbc.co.uk/rd/projects/ip-studio} and emerging tactile Internet environments.\footnote{E.g. http://www.huawei.com/minisite/5g/en/touch-internet-5G.html}
There, mmWave clients will often lie within the range of multiple APs that serve different numbers of stations, as exemplified in Fig.~\ref{fig:simplescenario}, possibly having dissimilar traffic demands. In this scenarios, \emph{the challenge is deciding both to which AP to associate clients and what airtime budget to allocate for each.} With an appropriate logic that is yet to be developed, such decisions could be enforced by \emph{central controllers} similar to those widely used in today's enterprise wireless networks to load balance clients over the available APs and bands.~\footnote{This is the case of commercially available solutions including Cisco WLC (\url{http://www.cisco.com/c/en/us/products/wireless/wireless-lan-controller/index.html}) and Aruba Mobility Controller (\url{http://www.arubanetworks.com/products/networking/controllers})}

\begin{figure}[t]
	\centering
	\includegraphics[width=0.6\columnwidth]{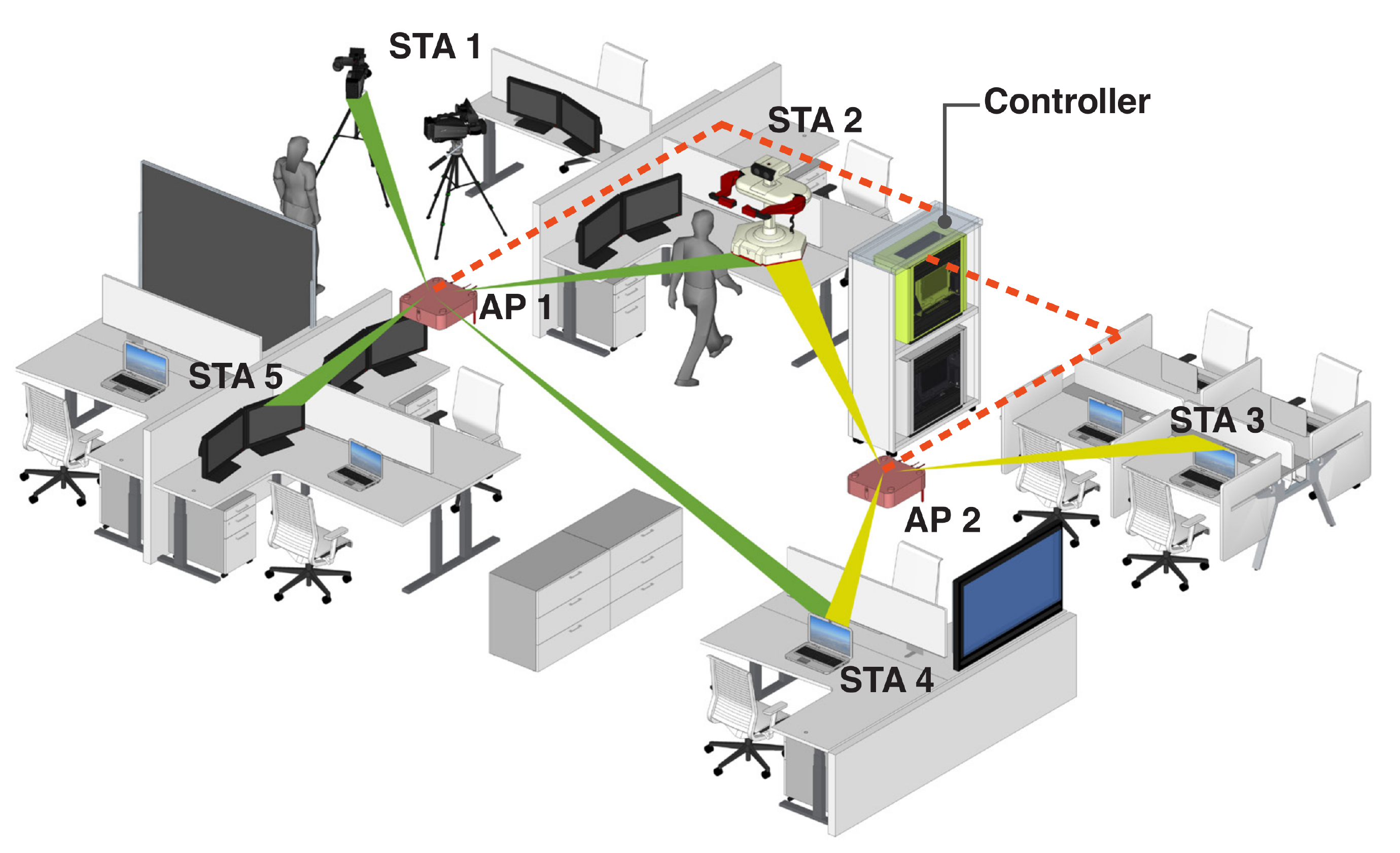}
	\caption{Simple example of the envisioned system, comprising two APs and five active stations. APs are connected to the central controller that runs the algorithms we introduce in this paper to find optimal client associations and airtime allocations that maximise network utility. Clients and APs communicate over directional links (shaded beams). Stations 2 and 4 are within the range of both~APs.}
	\label{fig:simplescenario}
%\vspace*{-1em}
\end{figure}

Commonly adopted signal strength based association policies are oblivious to load conditions \cite{Giannoulis:2013} and thus may lead to inappropriate decisions in mmWave networks. Likewise, client association mechanisms for traditional 802.11 wireless networks~\cite{Gong:2012,Athanasiou:2009} or cellular systems (e.g. \cite{Ye:2013}) are ill suited to mmWave, due to the substantial differences between these technologies. Association control and SP allocation in mmWave networks are largely unexplored; recent solutions focus primarily on load balancing, downplaying airtime budget constraints, and requiring non standard signalling~\cite{Athanasiou:2015}. Without carefully controlling which AP serves each client and for how long on average, we argue that the overall network throughput performance will be sub-optimal and user demand often unsatisfied, even when sufficient resources are available in the network.

\textbf{Contributions:} In this paper we formulate and solve the client association control and SP allocation tasks in high-end mmWave networks as utility maximisation problems, \emph{capturing the severe terminal issues unique to such systems}. We consider general scenarios with both backlogged stations and clients with finite load requirements, which encounter heterogeneous link qualities to the APs within range, and may be either static or mobile. We use the same definition of utility as given by F. Kelly, i.e. the sum of the logarithms of individual station throughputs~\cite{Kelly:1997}, which strikes a \emph{good trade-off between maximising network throughput and providing airtime fairness}. We envision a centralised network driven architecture (as in Fig.~\ref{fig:simplescenario}) that could be built upon recent advances in software-defined networking (SDN) \cite{sdn}, and IEEE 802.11 protocol amendments for wireless network~\cite{802.11v} and radio resource management~\cite{802.11k}. These would enable the central controller to collect information from the deployed APs and their clients, and \emph{enforce the computed client--AP associations and airtime allocations in a standard compliant fashion}. 
Specifically,
\begin{enumerate}
 \item We show that under backlog conditions, the utility optimisation problem we pose is NP-complete, but its relaxed version is convex. We use well-established Lagrangian tools to solve the relaxed version and give a linear complexity iterative rounding algorithm, to derive solutions to the original problem;
 \item For finite load scenarios we introduce an algorithm that captures terminal deafness and traffic constraints, and combines simulated annealing with airtime water filling techniques, to find near-optimal association matrices and airtime allocation vectors almost in real-time; 
 \item Using an NS-3 based 802.11ad simulation module we develop, building on preliminary 60GHz channel measurements, we demonstrate that our solutions provide up to 60\% higher total throughput as compared to the standard's default signal strength based policy, while satisfying the demand of 33\% more clients, in comparison with the recently proposed DAA mechanism~\cite{Athanasiou:2015}.%, as we are able to accommodate the traffic demand of all stations under finite load scenarios.
\end{enumerate}

To the best our knowledge, this is the first attempt to cast client association control and airtime allocation as utility maximisation problems in the mmWave context, whilst the solutions we provide are demonstrably effective under a broad range of network conditions.

%sThe rest of the paper is organised as follows. We outline the system architecture in Sec.~\ref{sec:architecture}, analyse individual throughput in Sec.~\ref{sec:analysis}, formulate utility maximisation problems and give solutions for backlogged and finite load scenarios in Sec.~\ref{sec:satcond} and~\ref{sec:finiteload}. We evaluate the performance of our schemes in Sec.~\ref{sec:evaluation}, give guidelines for deploying them in practical networks in Sec.~\ref{sec:implementation}, and discuss related work in Sec.~\ref{sec:related}. We conclude and discuss future directions in Sec.~\ref{sec:conclusion}.  \paul{(We may leave this paragraph out.)}

\section{System Architecture}
\label{sec:architecture}

We consider enterprise mmWave wireless network deployments with $M$ access points and $N$ clients. To thwart high signal attenuation inherent to the 60GHz band, each station (AP or client) is equipped with a transceiver that can digitally form and steer beams of very narrow widths, to transmit or to receive packets. Therefore, interference levels can be considered negligible and not impacting on the achievable bit rates. This is in line with recent studies \cite{Singh:2011} that confirm even uncoordinated packet exchanges between different transmit--receive pairs experience very small collision probabilities. \rev{While we acknowledge that interference increases in outdoor deployments as cell density grows~\cite{rebato2016understanding}, this is not applicable to indoor scenarios where ceiling-mounted access points will experience no interference when the angular separation between links is as little as 10-12$^\circ$~\cite{Zhu:2014b}. This is feasible in office environments even with consumer-grade equipment whose antenna patterns exhibit side lobes~\cite{Nitsche:2015}.} As such, in our setting mmWave links can be regarded as \emph{pseudo-wired} point-to-point connections.

\begin{figure}[t]
	\centering
	\includegraphics[width=0.8\columnwidth]{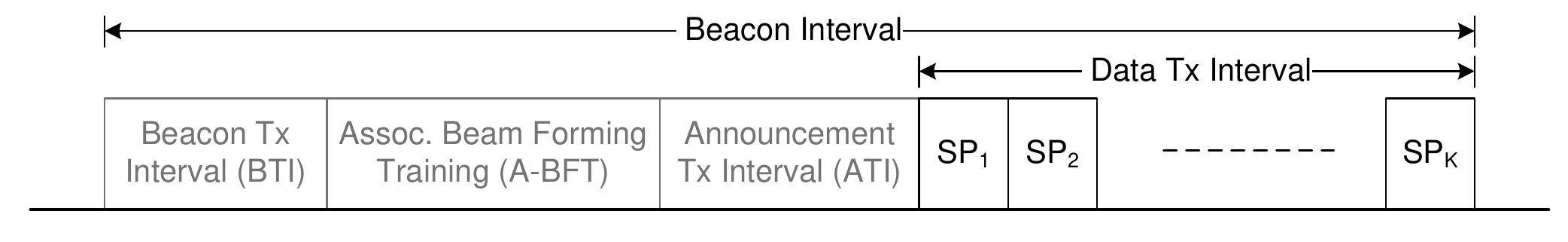}
	\caption{IEEE 802.11ad super-frame. Frame transmissions are performed in a scheduled fashion during the Data Tx Interval using service periods (SPs) \cite{802.11ad}.}
	\label{fig:superframe}
	%\vspace*{-1em}
\end{figure}

We focus on emerging enterprise networks with SDN capabilities that enable controllers to manipulate the configuration of APs (and their clients) via protocols such as NETCONF~\cite{rfc6241}.
In addition, APs can request neighbour and link measurement reports from stations, through 802.11k primitives~\cite{802.11k}, which a centralised controller will use to enforce control association to specific APs, as determined by the algorithms we propose in this work. \rev{mmWave channel sounding capable of measuring multi-path delays with 2ns granularity has been recently demonstrated~\cite{MacCartney:2017} and APs can represent per-client PHY rate measurements as 2-byte \emph{(client\_ID, PHY\_rate)} tuples whose transmission every beacon interval incurs negligible overhead.} 
Network assisted association requests as we proposed will be underpinned by 802.11v enhancements \cite{802.11v}. %, which end-user device manufacturers have begun implementing~\cite{Apple:2016}. 
Upon bootstrap, we assume clients associate following the de facto signal-to-noise ratio (SNR) based procedure, as stipulated by the standard~\cite{802.11ad}. 

Given the carrier-grade requirements specific to business-oriented networks, we
focus on 802.11ad mmWave networks where APs operate in a
scheduled fashion using the Service Period (SP) channel access scheme. Channel time is divided into fixed-size beacon intervals ($\sim$100ms)
that carry at the beginning synchronisation information, beam form training
related signalling, and announcement of the subsequent SPs assigned
\cite{802.11ad}. The duration of these is variable, while stations may aggregate
multiple data frames within their SPs, to improve protocol efficiency. We
summarise this behaviour in Fig.~\ref{fig:superframe}. By this approach, clients associated to the same AP do not contend for channel access and each AP serves only one associated client at a time during non-overlapping scheduled SPs, whilst multiple APs could simultaneously serve different clients.
We confine consideration to protocol features already specified by the approved 802.11ad standard~\cite{802.11ad}. We note however that the association control and airtime allocation solutions we propose herein could be easily adapted to encompass more aggressive modulation schemes, channel bonding, and MU-MIMO enhancements, which are candidates for future standard amendments, e.g. 802.11ay~\cite{802.11ay}.

The potential of mmWave networks can only be realised using highly directional narrow beams and any two communicating stations must know how to configure
their antennas for beam alignment. This is achieved through antenna and beamforming training, and the standard specifies a set of protocols for this purpose, but does not provide a precise rule on how to employ these. Importantly the standard does not specify how to select the best antenna configuration, the training process being complex
and chip implementation dependant.\footnote{Mechanisms to enable learning with high accuracy the relative
positions of devices in the network, currently under discussion in the IEEE 802.11az task group~\cite{802.11az}, may alleviate the complexity of beamforming.} 
For these reasons, in this work, we assume all stations are capable of performing reliable beamforming training, while we do not consider the precise details of the
actual mechanism. However, our modelling and the NS-3 based simulator we develop take into account the possible overhead due to beamforming protocols. 
\rev{Our focus is on the optimal client association and airtime allocation tasks, which have been largely overlooked by the research community. mmWave channel modelling remains outside the scope of our work.}

\section{Throughput Analysis}
\label{sec:analysis}

In this section we formalise the throughput performance of mmWave clients in the
envisioned enterprise scenario. Recall that we aim to meet strict
quality-of-service requirements and therefore medium access control is regulated
by the APs using the SP mechanism. We consider realistic multi-rate conditions,
i.e. clients employ Single Carrier or OFDM PHY modulation and coding schemes
that depend on the signal-to-noise ratio (SNR) on the links to the APs and the
channel bandwidth. Note that the SNR may be subject to link blockage due to human movement or other obstructions
on the direct propagation path between a client and an AP. Our model captures such circumstances as low achievable 
rates, or null bit rates when establishing a communications link may not be (temporarily) feasible. 
We denote by $r_{i,j}$ the bit rate a client $i$ achieves
when transmitting to an AP $j$ during an SP.

Consider a binary association vector $\mathbf{x}$, where an element $x_{i,j} \in
\{0,1\}$ indicates whether a client $i$ is associated to an AP $j$
($i=1,\ldots,N,$ and $j=1,\ldots,M$), i.e.

\begin{equation}
  \small
	\label{eq:decvar}
	x_{i,j} = \begin{cases}
		1, & \quad \text{if client $i$ is associated to AP $j$}; \\
		0, & \quad \text{otherwise}.
	\end{cases}
\end{equation}

We initially assume that all stations are backlogged (saturation conditions) and
later relax this assumption to account for general finite load scenarios.  Under
these circumstances, maximising network utility has been shown to correspond to
allocating equal airtimes to all stations connected to a given AP, irrespective
of their bit rates \cite{Patras:2015}. As such, our goal is to allocate a
$t_{i,j}$ fraction of an AP $j$'s total airtime budget to a connected client
$i$. Formally,

\begin{equation}
\small
\label{eq:satairtime}
 t_{i,j} = \frac{T_j - O_j}{\sum_{k=1}^{N} x_{k,j}}, \forall i,j,
\end{equation}

\noindent where $T_j$ denotes the duration of a super-frame (beacon interval) as enforced
by AP $j$ and $O_j$ is the protocol overhead due to beacon transmission,
(optional) beamforming (BF) training, and management operations (see
Fig.~\ref{fig:superframe}). Given the scheduled nature of the medium access in
802.11ad, hereafter we will use the terms airtime and service period (SP)
interchangeably.

With the above, the throughput $S_{i,j}$ obtained by client $i$ when connected
to AP $j$ is given by

\begin{equation}
  \small
	\label{eq:achtp}
	S_{i,j} = \frac{r_{i,j}t_{i,j}}{T_j} = \frac{h_j r_{i,j}}{\sum_{k=1}^{N} x_{k,j}},
\end{equation}

\noindent where $h_j =  (T_j-O_j)/T_j$. In what follows we only consider feasible
associations, i.e. those for which a client $i$ falls in the coverage of an AP
$j$ and thus $r_{i,j} \neq 0$.

\section{Utility Maximisation for Saturation Scenario}
\label{sec:satcond}

Our objective is to find the client \emph{association matrix} $\mathbf{x}$ that
\emph{maximises the total utility of the network}, i.e. solve the following
optimisation problem:

\begin{align}
  \small
	&\max_{\mathbf{x}} U := \sum_{j=1}^{M} \sum_{i=1}^{N} x_{i,j}\log{S_{i,j}}, \label{eq:goal} \\
	\text{s.t.}\
%	&S_{i,j} \leq \frac{h_j r_{i,j}}{\sum_{k=1}^{N} x_{k,j}}, \forall i,j; \quad ~~\text{(throughput feasibility)}\\
	&\sum_{j=1}^{M} x_{i,j} = 1, \forall i; \qquad \qquad \quad \text{(single AP association)} \label{eq:constroneap} \\
  	&x_{i,j} \in \{0,1\}, \forall i,j.      \qquad \qquad \qquad ~\text{(function domain)}\label{eq:constrdecvar}
\end{align}

\begin{lemma}
 The optimisation problem specified by (\ref{eq:goal})--(\ref{eq:constrdecvar})
	is NP-complete.
\end{lemma}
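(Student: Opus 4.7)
My plan is to prove NP-completeness in the standard two steps: (i) membership in NP, and (ii) NP-hardness via polynomial-time reduction from a known NP-complete problem. For (i), the decision version asks whether, for a given instance and threshold $K$, some binary $\mathbf{x}$ satisfying (\ref{eq:constroneap})--(\ref{eq:constrdecvar}) achieves $U(\mathbf{x}) \geq K$. Given a candidate $\mathbf{x}$, I verify the constraints in $O(NM)$ time, compute each $S_{i,j}$ from (\ref{eq:achtp}), and evaluate $U$ from (\ref{eq:goal}) to polynomial precision using a standard rational approximation of $\log$; the comparison $U \geq K$ is then polynomial, placing the problem in NP.

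For (ii), I would first rewrite the objective by substituting (\ref{eq:achtp}) into (\ref{eq:goal}):
\[
U = \sum_{i,j} x_{i,j} \log(h_j r_{i,j}) \;-\; \sum_{j} n_j \log n_j, \qquad n_j := \sum_i x_{i,j}.
\]
The first term is a linear per-assignment reward; the second is a strictly convex load-balancing penalty that, subject to $\sum_j n_j = N$, is minimised when clients are spread equally across APs. Encoding feasibility restrictions via tiny (or zero) $r_{i,j}$ for forbidden pairs, the problem subsumes restricted-machine load balancing under a strictly convex cost, which is a classically NP-hard setting. My plan is therefore to reduce from a source problem whose combinatorial core matches this trade-off between per-assignment rewards and balanced assignment under restrictions; natural candidates are PARTITION or 3-PARTITION (for the numerical balancing flavour) and EXACT COVER BY 3-SETS (for the restricted-assignment flavour). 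The reduction maps a source instance to a choice of rates $r_{i,j}$ (with forbidden pairs assigned negligible rate) and a threshold $K$, such that the source is a yes-instance iff some $\mathbf{x}$ attains $U(\mathbf{x}) \geq K$.

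The main obstacle is that logarithms do not combine additively with weighted sums, so numerical reductions from PARTITION-type problems cannot simply transfer their thresholds into the utility. Designing the gadget so that the linear reward term and the convex balancing penalty together produce a utility threshold that sharply separates the yes and no cases of the source problem is, I expect, the bulk of the technical work. Once that gadget is in place, verifying polynomial size, polynomial-time computability, and correctness of the reduction is routine.
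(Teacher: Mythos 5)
Your write-up is a plan, not a proof, and the gap is exactly where you say it is: the NP-hardness reduction is never constructed. The algebraic identity $U=\sum_{i,j}x_{i,j}\log(h_j r_{i,j})-\sum_j n_j\log n_j$ with $n_j=\sum_i x_{i,j}$ is correct, and the observation that the problem couples a linear per-assignment reward with a strictly convex balancing penalty under eligibility restrictions is a genuinely useful structural insight. But until you exhibit a concrete polynomial-time map from PARTITION / 3-PARTITION / X3C instances to a rate matrix $(r_{i,j})$ and a threshold $K$, together with a proven separation between yes- and no-instances, hardness is not established. The difficulty you flag is real and is not a formality: the source problem's numbers enter the utility only through $\log r_{i,j}$, while the penalty is $n_j\log n_j$, so thresholds do not transfer linearly and the yes/no gap can shrink below the precision available to $K$ unless the gadget is engineered to keep it polynomially bounded. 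Two further points need care. First, "tiny (or zero)" rates for forbidden pairs: $r_{i,j}=0$ makes $\log S_{i,j}$ undefined (the paper explicitly restricts attention to feasible associations with $r_{i,j}\neq 0$ immediately after (\ref{eq:achtp})), so forbidden pairs must be modelled either as hard constraints in the decision problem or via rates small enough to dominate any attainable gain, which again interacts with the precision of $K$. Second, the NP-membership step needs the comparison $U\ge K$ to be decidable exactly, e.g.\ by exponentiating to compare products of rationals, rather than by an unspecified "rational approximation of $\log$."

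For context, the paper's own argument takes a different route: it specialises the problem to one where each client obtains a constant throughput $\theta_j$ at AP $j$, notes the objective then collapses to the linear form $\sum_j\sum_i\theta_j x_{i,j}$ under $\sum_j x_{i,j}\le 1$, identifies this with the 0--1 knapsack problem, and concludes hardness of (\ref{eq:goal})--(\ref{eq:constrdecvar}) because it contains that special case. Your intended construction, if completed, would have the more standard shape of a many-one reduction \emph{from} a known NP-complete problem \emph{to} the target, which is the more defensible direction; but as submitted it stops precisely at the step that would make it a proof.
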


\begin{proof}
Denote $P$ the problem given in (\ref{eq:goal})--(\ref{eq:constrdecvar}). By (\ref{eq:achtp}), $S_{i,j}$ is a function of the inverse sum of some terms $x_{k,j}$, for all $j$. Therefore the objective of the problem posed is a non-linear function of variables in the $\{0,1\}^N$ set (\ref{eq:constrdecvar}). Now, consider a simpler problem $P'$ where a client $i$ attains a small constant throughput $\theta_j$ when connected to AP $j$, irrespective of the number of clients this servers. The objective (\ref{eq:goal}) becomes $\sum_{j=1}^{M} \sum_{i=1}^{N} \theta_j x_{i,j}$ and since $x_{i,j} \in \{0,1\}$, the constraint (\ref{eq:constroneap}) is equivalent to $\sum_{j=1}^{M} x_{i,j} \leq 1$. It follows that $P'$ is an instance of the \mbox{0--1} knapsack problem, which by Theorem 15.8 in \cite{Papadimitriou:2000} is NP-complete. Since a solution to $P$ can be verified, $P$ is NP, and as $P > P'$, while $P'$ is NP-complete, then $P$ is NP-complete.

\end{proof}

Finding a solution to this type of optimisation problems within reasonable time
is known to be difficult \cite{Arora:2009}. Consequently, we proceed with a
relaxation of our original problem, replacing the constraint $x_{i,j} \in
\{0,1\}$ and allowing $x_{i,j}$ in the $[0,1]$ interval (Fractional User
Association). This is similar in nature to the linear programming relaxation of
the set cover problem studied by Lov\'asz~\cite{Lovasz:1975}. We then give a
linear complexity iterative rounding algorithm that derives a solution to the
original problem from that of the relaxed version.

We express formally the relaxed optimisation problem as:
\begin{align}
  \small
	&\max_{\mathbf{x}} U := \sum_{j=1}^{M} \sum_{i=1}^{N} x_{i,j}\log{S_{i,j}}, \label{eq:goalrel} \\
	\text{s.t.}\
	&S_{i,j} \leq h_j r_{i,j}, \forall i,j; \label{eq:constrthr}\\
	&\sum_{j=1}^{M} x_{i,j} \leq 1, \forall i; \label{eq:constroneaprel} \\
  	&-x_{i,j} \le 0,  \forall i,j.	\label{eq:constrgezerorel}
\end{align}

\noindent The constraint in (\ref{eq:constroneaprel}) ensures any client $i$ does not
communicate to more than one AP at a given time (single transceiver), while
(\ref{eq:constrthr}) ensures that the throughput allocated to client $i$
when connected to AP $j$ does not exceed the maximum attainable bit
rate under the current signal quality conditions, if client $i$ was the only one
connected to AP $j$. Note that in the original problem (\ref{eq:goal}), where $x_{i,j} \in
\{0,1\}$, it was note necessary to explicitly impose this constraint, since it is implicitly satisfied by (\ref{eq:constroneap}).

\subsection{Convexity Properties and Problem Solution}
Next we analyse the convexity properties of the objective function in the relaxed optimisation problem and give insights into the solution space. 

\begin{lemma}
\label{lem:convexity}
 The utility $U$ function defined by (\ref{eq:goalrel}) is concave.
\end{lemma}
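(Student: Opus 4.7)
The plan is to rewrite $U$ in a form where concavity becomes transparent, by substituting the expression for $S_{i,j}$ from (\ref{eq:achtp}) and collapsing an inner sum.

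First I would plug $S_{i,j}=h_j r_{i,j}/\sum_{k=1}^N x_{k,j}$ into the objective and split the logarithm, obtaining
\begin{equation*}
U = \sum_{j=1}^{M}\sum_{i=1}^{N} x_{i,j}\log\bigl(h_j r_{i,j}\bigr) \;-\; \sum_{j=1}^{M}\sum_{i=1}^{N} x_{i,j}\log\!\Bigl(\sum_{k=1}^{N} x_{k,j}\Bigr).
\end{equation*}
The first double sum is \emph{linear} (hence concave) in $\mathbf{x}$, so the whole analysis reduces to showing that the second double sum is convex in $\mathbf{x}$.

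The key observation is that the logarithm in the inner sum does not depend on the summation index $i$. Setting $y_j := \sum_{k=1}^{N} x_{k,j}$, the inner sum collapses:
\begin{equation*}
\sum_{i=1}^{N} x_{i,j}\log\!\Bigl(\sum_{k=1}^{N} x_{k,j}\Bigr) \;=\; y_j \log y_j,
\end{equation*}
so that the second term of $U$ equals $\sum_{j=1}^M y_j \log y_j$. I would then invoke the elementary fact that the scalar map $y \mapsto y\log y$ is convex on $[0,\infty)$ (its second derivative $1/y$ is positive; continuity at $0$ is handled by the usual convention $0\log 0 = 0$). Since each $y_j$ is an affine function of $\mathbf{x}$, and convexity is preserved under composition with an affine map as well as under nonnegative summation, $\sum_j y_j \log y_j$ is convex in $\mathbf{x}$.

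Combining the pieces, $U$ is the sum of an affine function and the negative of a convex function, hence concave on the feasible region defined by (\ref{eq:constroneaprel})--(\ref{eq:constrgezerorel}), which completes the argument. There is no real obstacle in this proof; the only subtlety worth flagging is the index-collapse step $\sum_i x_{i,j}\log(\sum_k x_{k,j})=y_j\log y_j$, which is what makes the logarithmic term tractable and turns what looks superficially like a sum of non-concave contributions into a clean $y\log y$ convexity argument.
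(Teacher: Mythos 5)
Your argument is correct, and it takes a genuinely different route from the paper. The paper works term by term: it computes second-order partial derivatives of each summand $x_{i,j}\log S_{i,j}$ and asserts from the signs of these (diagonal) entries that the Hessian is negative semi-definite, so each term is concave and $U$, as a sum of concave functions, is concave. Your proof instead substitutes $S_{i,j}=h_j r_{i,j}/\sum_k x_{k,j}$, splits the logarithm, and collapses the inner sum over $i$ into $y_j\log y_j$ with $y_j=\sum_k x_{k,j}$, reducing everything to the convexity of the scalar map $y\mapsto y\log y$ composed with an affine function. This buys you two things. First, it is more elementary and self-contained: no Hessian computation is needed, only $\bigl(y\log y\bigr)''=1/y>0$ and preservation of convexity under affine precomposition and nonnegative sums. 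Second, it is arguably more robust than the paper's route: negativity of the diagonal second derivatives alone does not establish negative semi-definiteness of a Hessian, and the individual summands $x_{i,j}\log S_{i,j}$ are in fact not separately jointly concave (e.g. $\partial^2\bigl(-x_{i,j}\log\sum_k x_{k,j}\bigr)/\partial x_{l,j}^2 = x_{i,j}/\bigl(\sum_k x_{k,j}\bigr)^2 \ge 0$ for $l\neq i$); concavity only emerges after summing over $i$, which is exactly the index-collapse step you flag as the crux. Your decomposition therefore proves the same statement by exhibiting $U$ as an affine function minus $\sum_j y_j\log y_j$, which is the cleaner and safer path.
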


\begin{proof}
 The second order partial derivative of the terms $x_{i,j}\log S_{i,j}$ with respect to $x_{i,j} \neq 0$ is
 \[
   \frac{\partial^2 \left(x_{i,j}\log S_{i,j}\right)}{\partial^2x_{i,j}} = -\frac{1}{\sum_{k=1}^{N} x_{k,j}} - \frac{\sum_{k=1, k \neq
   i}^{N}x_{k,j}}{\left(\sum_{k=1}^{N} x_{k,j}\right)^2} < 0,
 \]
and the same with respect to $x_{l,j} \neq 0, l \neq i$ is
\[
 \frac{\partial^2 \left(x_{i,j}\log S_{i,j}\right)}{\partial^2x_{l,j}} = - \frac{\sum_{k=1, k \neq i}^{N} x_{k,j}}{\left(\sum_{k=1}^{N} x_{k,j}\right)^2} < 0.
\]
Thus the Hessian $\nabla^2\mathbf{x\log(S)}^T$ is negative semi-definite. By Boyd and Vandenberghe \cite{Boyd:2009}, it follows that functions $x_{i,j}\log S_{i,j}$  are concave, and since the utility $U$ is an affine combination of such functions, then it is concave.
 \end{proof}

Since we are working with multiple single-hop TDMA-type systems, the capacity region of which is convex \cite{Clover:2006}, constraint (\ref{eq:constrthr}) is also convex. Further, constraints given by (\ref{eq:constroneaprel}) %(\ref{eq:constrtpfeasrel}), %(\ref{eq:constrleonerel}) 
and (\ref{eq:constrgezerorel})  are convex and thus by Lemma~\ref{lem:convexity} the relaxed optimisation problem defined by (\ref{eq:goalrel})--(\ref{eq:constrgezerorel}) is convex and a solution exists. Slater's sufficient condition is satisfied and thus strong duality holds. The Lagrangian is
\begin{eqnarray*}
 L(\mathbf{x},\mathbf{\lambda},\mathbf{\mu},\mathbf{\nu}) = &-& \sum_{j=1}^{M} \sum_{i=1}^{N} x_{i,j}\log{S_{i,j}} \nonumber \\
  &+& \sum_{i=1}^{N} \sum_{j=1}^{M} \lambda_{i,j} \left(S_{i,j} - h_j r_{i,j}\right) \label{eq:constrfeas}\\
  &+& \sum_{i=1}^{N} \mu_{i} \left(\sum_{j=1}^{M} x_{i,j}-1\right) - \sum_{i=1}^{N} \sum_{j=1}^{M} \nu_{i,j} x_{i,j} \nonumber
\end{eqnarray*}

The Karush-Kuhn-Tucker (KKT) condition \cite{Hillier:2009} for $S_{i,j}$ is
\[
 \frac{\partial L}{\partial S_{i,j}} = 0,
\]
which gives
\[
 \lambda_{i,j} = x_{i,j} \frac{1}{S_{i,j}}.
\]
In the above, we distinguish two possible cases: (1) client $i$ associates to AP
$j$ and thus $x_{i,j} > 0$, which means $\lambda_{i,j} = 0$ (note that $S_{i,j}$
is non zero with $x_{i,j} > 0$); and (2) $x_{i,j} = 0$ from which it follows
that $\lambda_{i,j} = 0$. From complementary slackness it follows that the
inequality constraint (\ref{eq:constrthr}) is not tight and thus the optimum
$\mathbf{x'}$ may not be unique. However we can still employ the widely used
trust region method (TRM) to solve numerically the relaxed optimisation
problem~\cite{TRM}.

\subsection{Rounding Algorithm}
Once we solved the relaxed problem (\ref{eq:goalrel}), the next step is finding a solution to the original utility maximisation problem~(\ref{eq:goal}), which recall is NP-complete. To this end, we design an iterative rounding algorithm that converts the fractional association matrix $\mathbf{x'}$, to an integer association matrix $\mathbf{x^*}$, which is the solution of the original problem (\ref{eq:goal}). 

The simplest way to accomplish this would be a maximum likelihood approach, i.e.
\begin{equation}
  \small
	x_{i,j}^* = \begin{cases}
		1, & \quad \text{if } \max_{k} \{x'_{i,k}\} = x'_{i,j}; \\
		0, & \quad \text{otherwise};
	\end{cases}
\end{equation}
however, this performs poorly in numerous situations, where the solution it returns is identical to that of the SNR-based association. We exemplify this in Fig.~\ref{fig:max-like} for a simple topology with two APs and four stations. Also shown in the figure is the superior performance (87\% higher total throughput) attainable with the iterative rounding algorithm we propose, whose pseudo-code is given in Algorithm~\ref{alg:round_alg} and detailed next.

\begin{figure}[t]
\centering
 \includegraphics[width=0.35\columnwidth]{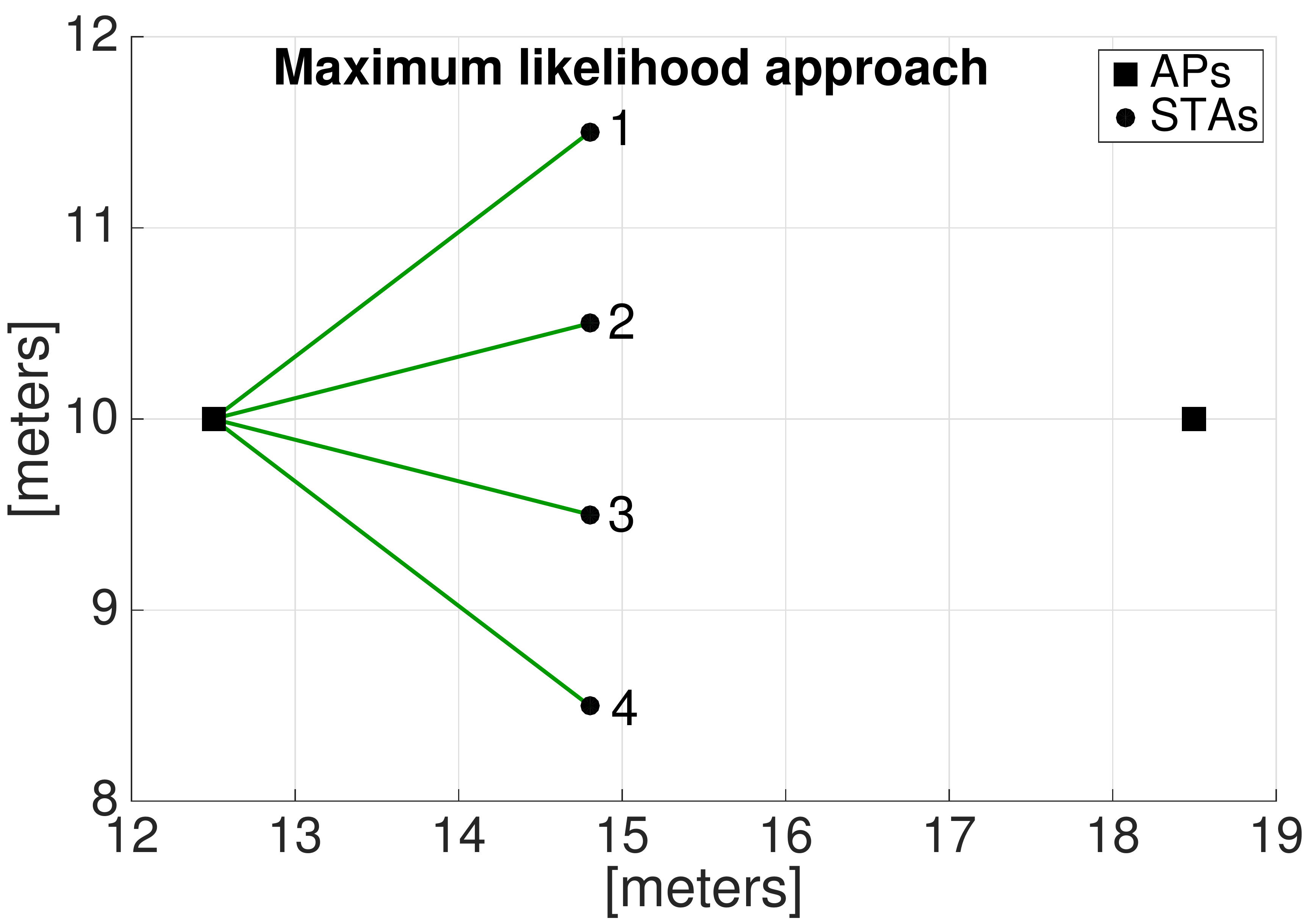}
 \includegraphics[width=0.35\columnwidth]{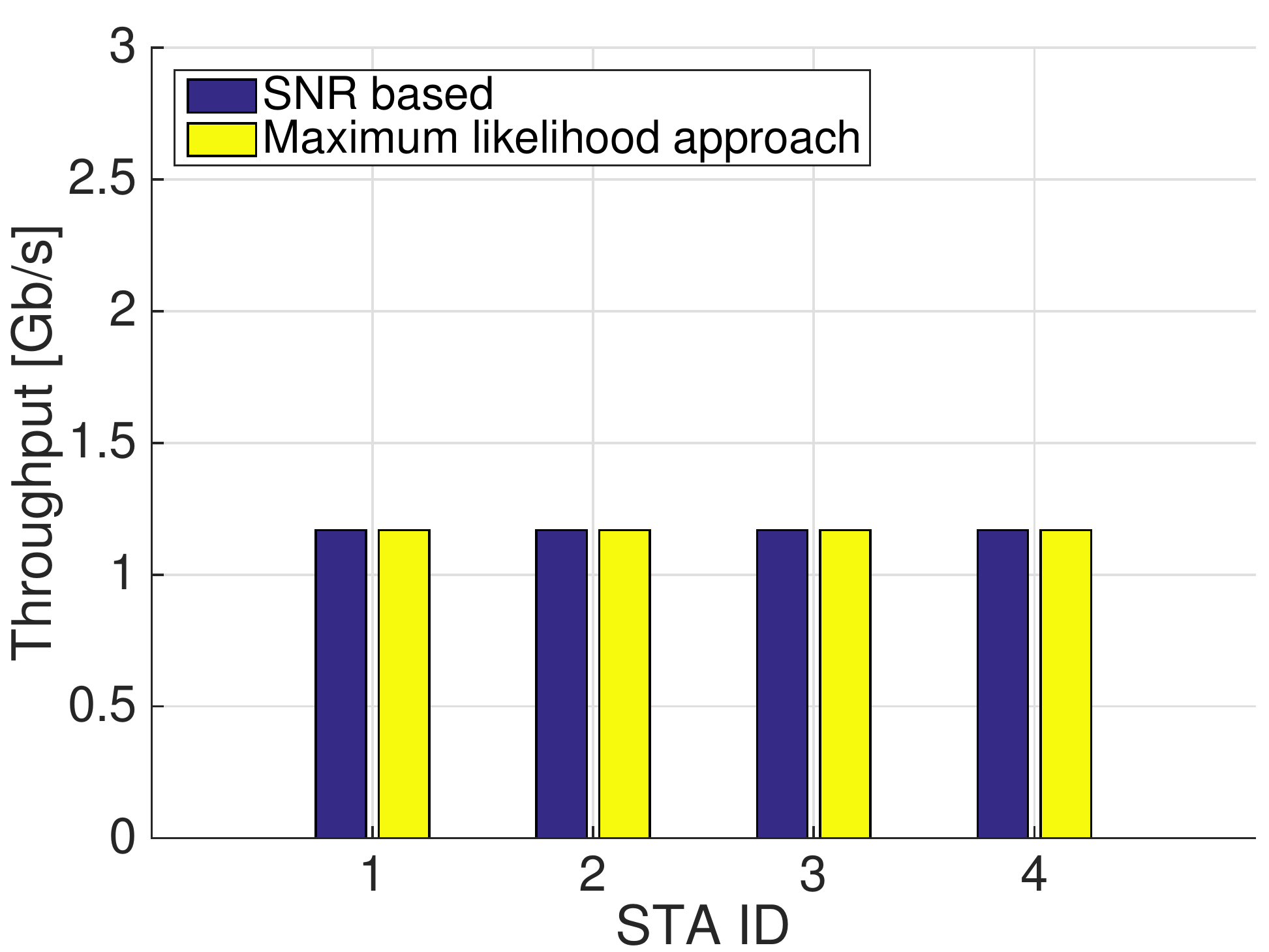}
 \includegraphics[width=0.35\columnwidth]{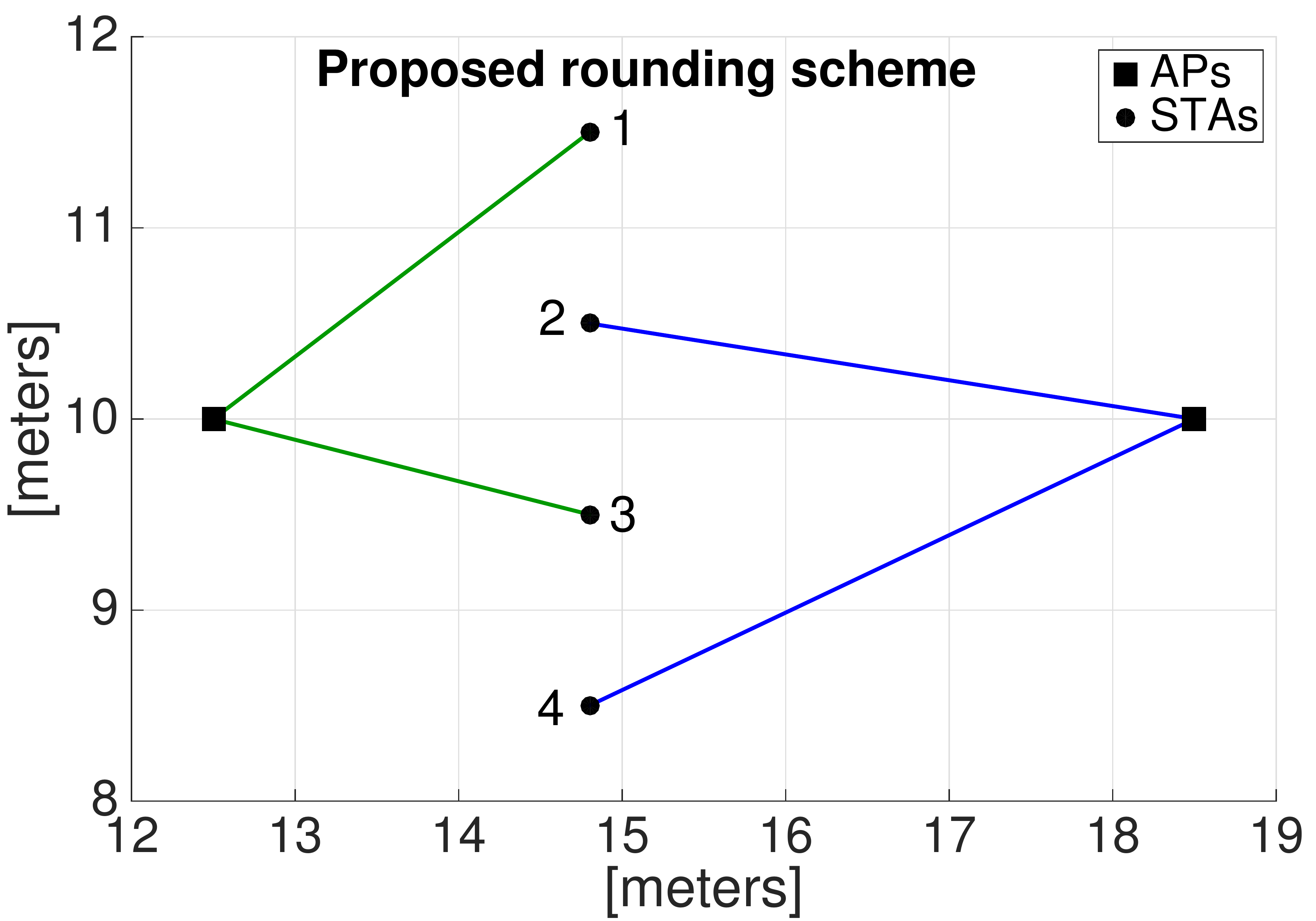}
 \includegraphics[width=0.35\columnwidth]{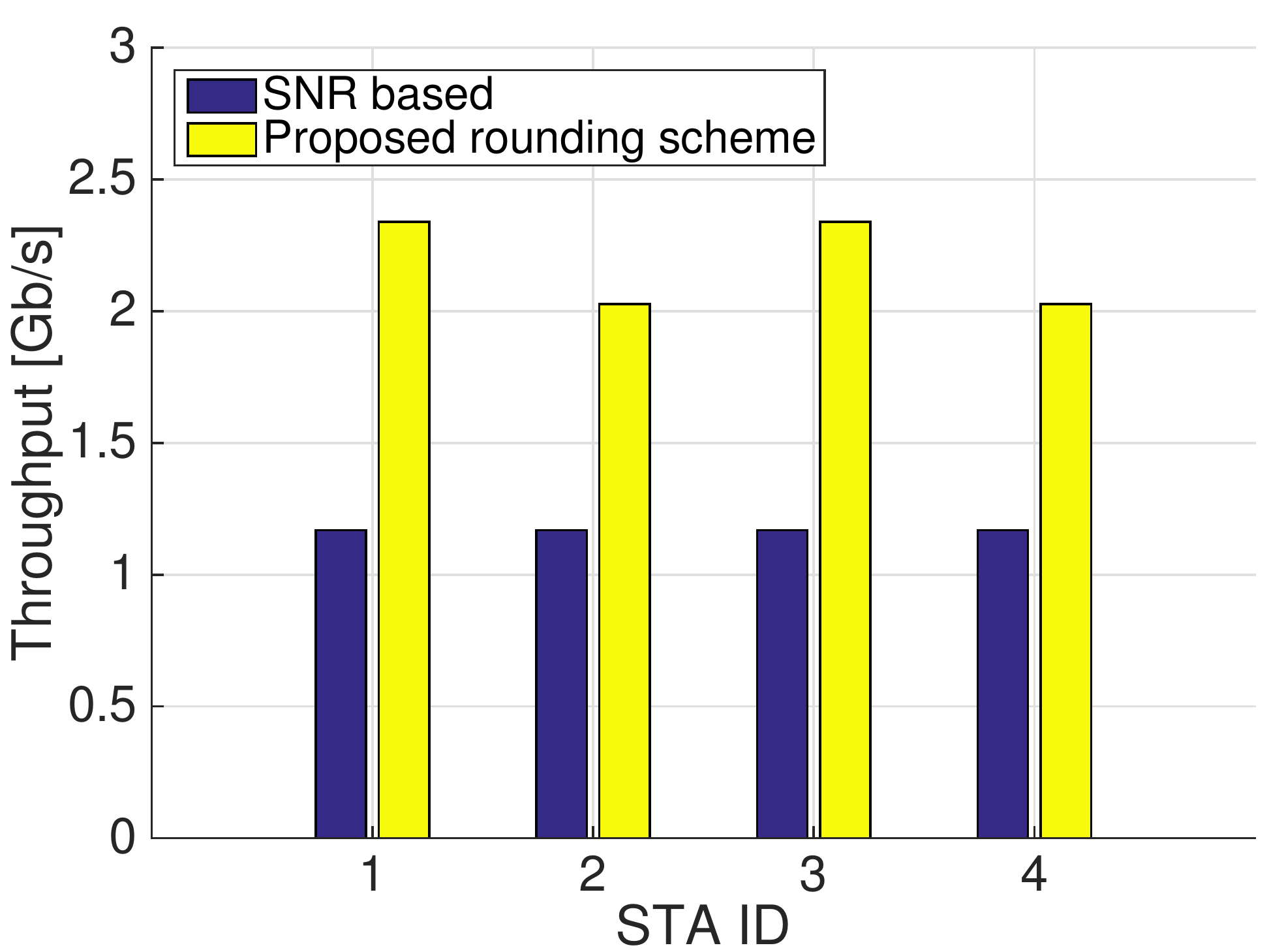}
 \caption{Example of mmWave network with 2 APs and 4 clients. Association enforced by the maximum likelihood approach (top left) and attainable throughput vs that of SNR-based method (top right). Association enforced by the proposed rounding algorithm (bottom left) and corresponding throughput (bottom right). Aggregate throughput gain achieved is 87\%. Numerical results.}
 \label{fig:max-like}
% \vspace*{-0.5em}
\end{figure}

%\vspace*{-0.25em}
\setlength{\textfloatsep}{8pt}
\begin{algorithm}[b!]
%  \small
	\caption{Iterative rounding}%$\mathbf{x^*}=Round(\mathbf{x'}$)}
	\begin{algorithmic}[1]
	\REQUIRE $\mathbf{x'}$ is a feasible solution of problem (\ref{eq:goalrel}).
	\ENSURE $\mathbf{x^*}$ is a feasible solution of problem (\ref{eq:goal}).
	\STATE $X = \{(i,j) \mid 1 \le i \le N, 1 \le j \le M\}$; $X^0 = X$; $n = 0$.
	%\STATE Set $x^*_{i,j} = 0, \forall{i,j}$
	%\FOR{$k = 1$ \TO $k = N$}
	\REPEAT
		%\STATE Find $\hat{i}$, $\hat{j}$ s.t. $x'_{\hat{i},\hat{j}} = \max_{i,j} {x'_{i,j}}$ and $\{\hat{i},\hat{j}\} \in X^n$
		\STATE Find $\hat{i}$, $\hat{j} \text{ s.t. } x'_{\hat{i},\hat{j}} = \max_{(i,j) \in X^n} \left({x'_{i,j}}\right)$;
		\hfill \COMMENT{\emph{{Find $x'_{i,j}$ closest to 1}}}
		%\STATE Set $x^*_{\hat{i},\hat{j}} = 1$
		\STATE Build vector $\mathbf{R}$, s.t. $R_j = x'_{\hat{i}, j}, \forall{j}$;
		\hfill \COMMENT{\emph{{Frac. assoc. freed by $x'_{\hat{i},\hat{j}}$ rounding}}}
		%\STATE Build vector $\mathbf{R}$, s.t. $R_j = x'_{\hat{i}, j} - x^*_{\hat{i}, j}, \forall{j}$;
		\STATE Build vector $\mathbf{D}$, s.t. \\$D_j = \{\|x'_{i,j}\| \mid i \neq \hat{i}, (i,j) \in X^n, r_{i,j} > 0\}, \forall{j}$;
		%$D$ s.t. $D_j = \sharp r_{i \neq \hat{i}, j} \gt 0, \forall{j}$
		\STATE Set $x'_{i,j} = x'_{i,j} + R_j / D_j, \forall{(i,j) \in X^n}$, \\s.t. $i \neq \hat{i}, r_{i,j} > 0$;
		\hfill \COMMENT{\emph{{Update $x'_{i,j}$} not already rounded}}
		\STATE Set $X^{n+1} = X^n \setminus \{(\hat{i}, j) \mid 1 \le j \le M\}$;
		\hfill \COMMENT{\emph{{Remove $(\hat{i},j)$ for rounded $x'_{\hat{i},j}$}}}
		%\STATE Set $x'_{\hat{i}, j} = 0, \forall{j}$
		\STATE Set $x'_{\hat{i}, \hat{j}} = 1$;
		\hfill \COMMENT{\emph{{Round to 1 the selected $x'_{\hat{i},\hat{j}}$}}}
		\STATE Set $x'_{\hat{i}, j} = 0, \forall{j \neq \hat{j}}$;
		\hfill \COMMENT{\emph{{Ensure $\hat{i}$ is associated only to AP $\hat{j}$}}}
		\STATE Set $n = n + 1$;
	%\ENDFOR
	\UNTIL $X^n \neq \varnothing$
		\hfill \COMMENT{\emph{{Rounding complete}}}
	\STATE Set $\mathbf{x^*} = \mathbf{x'}$.
		\hfill \COMMENT{\emph{{$x^*$ is a solution of problem (\ref{eq:goal})}}}
	\end{algorithmic}
	\label{alg:round_alg}
\end{algorithm}

The proposed rounding algorithm requires $N$ iterations (only depending on the number of clients) and thus has linear complexity $O(N)$. We work with a set $X$ which maintains a list of $(i,j)$ tuples corresponding to the $x'_{i,j}$ terms that have not been yet subject to rounding. Initially, $X$ contains all the tuples and we remove $j$ of them at each iteration, as one client is assigned to a \emph{single} AP, until $X$ is empty.

Each iteration is composed of a rounding (lines 3, 8--9) and an update operation
(lines 4--7). The rounding operation sets to 1 (line~8) the $x'_{\hat{i},\hat{j}}$
element whose value is the largest among all $x'_{i,j}$ for which $(i,j)$ is in
$X$ at the current iteration $n$ (line~3). Then, we set to 0 the
$x'_{\hat{i},j}$ terms, $\forall{j \neq \hat{j}}$ (line~9). The key idea is to prioritise
the rounding of the $x'_{i,j}$ closest to 1. If there is more than one such
$(\hat{i},\hat{j})$ tuple, the algorithm chooses one randomly.

The update operation recomputes the values of all the $x'_{i,j}$ which have not
been rounded during iteration $n$ and that are still to be rounded (line
6). The new value is computed by adding to $x'_{i,j}$ the value $R_j /
D_j$, where $R_j$ (line~4) is the fractional association freed on AP $j, j \ne \hat{j}$, by the rounding
of $x'_{\hat{i},\hat{j}}$, and $D_j$ (line~5) is the number of still-to-be-rounded clients
that could associate to AP $j$.
This update is designed in order to satisfy constraint
(\ref{eq:constroneaprel}), i.e.  $\sum_{j=1}^{M} x'_{i,j} = 1, \forall i$.

In Sec.~\ref{sec:evaluation} we demonstrate that by solving the relaxed optimisation problem and subsequently applying our rounding algorithm, we achieve
substantial improvements as compared to SNR-based association control mechanisms for mmWave under saturation condition.
%in more complex network scenarios.
In what follows, we address utility maximisation under finite load circumstances, i.e. when stations do not always have traffic to transmit.

\section{Utility Maximisation for Finite Load Scenario}
\label{sec:finiteload}
In this section we consider the general finite load scenario where each client $i$ has an offered load %defined by the parameter 
$\lambda_i$.
With the introduction of the parameter $\lambda_i$, the definitions of throughput $S_{i,j}$ and airtime $t_{i,j}$ given in (\ref{eq:achtp})
and~(\ref{eq:satairtime}) need to be revisited, since the airtime allocated to client $i$ when associated to the AP $j$ is now also a function of the client's
offered load ($\lambda_i$) and that of the other clients associated to the same AP $j$. This effectively means the airtime $t_{i,j}$ becomes a variable of the optimisation problem, which we formalise as:

\begin{align}
 % \small
	&\max_{\mathbf{x,t}} U := \sum_{j=1}^{M} \sum_{i=1}^{N} x_{i,j}\log{t_{i,j}r_{i,j}}, \label{eq:goalload} \\
	\text{s.t.}\
	&\sum_{j=1}^{M} x_{i,j} = 1, \forall i; \qquad \qquad ~\text{(single AP association)} \label{eq:constroneapload} \\
	&\sum_{i=1}^{N} x_{i,j}t_{i,j} \le h_jT_j, \forall j; \qquad \quad \text{(airtime feasibility)} \label{eq:constratfeasload}\\
	&x_{i,j}t_{i,j}r_{i,j} \le \lambda_i, \forall i,j; \qquad \qquad  ~~\text{(load feasibility)} \label{eq:constrloadfeasload}\\
	&-t_{i,j} \le 0, \forall i,j; \label{eq:constrtimegezeroload}\\
	&x_{i,j} \in \{0,1\}, \forall i,j.	\label{eq:constrdecvarload}
\end{align}

\noindent Finding a solution to the above involves solving two different problems in parallel, namely: 
\begin{enumerate}
 \item Finding the best association matrix $\mathbf{x}$ as in the case of the saturation scenario, and
 \item Finding the best airtime allocation $t_{i,j}$ that takes into account the load requirements $\lambda_i$, while providing some form of fairness.
\end{enumerate}
To accomplish these tasks, we propose an approach that combines simulated annealing and water filling algorithms, and subsequently show that this achieves remarkably higher throughput performance in comparison with the recent DAA scheme \cite{Athanasiou:2015} and the default SNR-based association policy.

\subsection{Simulated Annealing and Water Filling}
\label{sec:siman}
The underlying principle behind solving the problem defined by
(\ref{eq:goalload})--(\ref{eq:constrdecvarload}) is the following:
first we assume saturation conditions and use the method described in
Sec.~\ref{sec:satcond} to find an initial integer association matrix
$\mathbf{x^*}$. We use $\mathbf{x^*}$ as the starting point ($\mathbf{x0}$)
for the simulated annealing algorithm we propose, which we summarise in
Algorithm~\ref{alg:siman} and detail next.
Note that choosing the starting
point in this way ensures a solution is found significantly faster,
as compared to when using the outcome of the SNR-based association instead.
This is particularly true when offered loads $\lambda_i$ are moderate--high,
since $\mathbf{x^*}$ is usually close to the best solution found by our heuristic,
as revealed by analysing multiple topologies.

\begin{algorithm}[t]
%  \small
  \caption{\small{SimulatedAnnealing ($\mathbf{x0}, \mathbf{r}, \mathbf{\lambda}, \mathbf{h}, T0, Tmin, \alpha, q, p$)}}
  \begin{algorithmic}[1]
    \STATE Set $\mathbf{x} = \mathbf{x0}, T = T0, v = 1$ \label{alg:siman:init1}
    \STATE Set $\mathbf{t}=WaterFilling(\mathbf{x}, \mathbf{r}, \mathbf{h}, \mathbf{\lambda})$ \label{alg:siman:init2}
    \REPEAT \label{alg:siman:entertemploop}
    \item[] \# \emph{Stabilisation loop for a given temperature $T$}
    \FOR{$k=1$ \TO $k=q$} \label{alg:siman:enterinnerloop}
      \IF{$x_{i,j}t_{i,j}r_{i,j} = \lambda_i, \forall{i,j} \text{ s.t. } x_{i,j} = 1$} \label{alg:siman:stopcond}
      \STATE Return $\mathbf{x}$ and $\mathbf{t}$ \hfill \COMMENT{\emph{Optimal solution found}}
			\ENDIF
      \STATE Set $\mathbf{x'}=Perturbate(\mathbf{x}, \mathbf{t}, \mathbf{\lambda}, p)$ \label{alg:siman:pert}
      \STATE Set $\mathbf{t'}=WaterFilling(\mathbf{x'}, \mathbf{r}, \mathbf{h}, \mathbf{\lambda})$ \label{alg:siman:wf}
      \STATE Set ${\Delta}E = U(\mathbf{x'}, \mathbf{t'}) - U(\mathbf{x}, \mathbf{t})$ \hfill \COMMENT{\emph{Compute energy}} \label{alg:siman:delta}
			\IF{${\Delta}E > 0$}
        \STATE Set $\mathbf{x} = \mathbf{x'}$, $\mathbf{t} = \mathbf{t'}$ \hfill \COMMENT{\emph{Better solution found}} \label{alg:siman:keep}
			\ELSE
				\STATE Set $y \text{ to a random value} \in [0,1)$
        \IF{$y < e^{{\Delta}E/T}$}
        \STATE Set $\mathbf{x} = \mathbf{x'}$, $\mathbf{t} = \mathbf{t'}$ \hfill \COMMENT{\emph{Accept worse solution}} \label{alg:siman:keepprob}
				\ENDIF
			\ENDIF
		\ENDFOR \label{alg:siman:exitinnerloop}
    \STATE Set $T = T\alpha^v$ \hfill \COMMENT{\emph{Update temperature}} \label{alg:siman:tempdec}
		\STATE Set $v = v + 1$
	\UNTIL $T > Tmin$ \label{alg:siman:exittemploop}
  \STATE Return $\mathbf{x}$ and $\mathbf{t}$
	\end{algorithmic}
	\label{alg:siman}
\end{algorithm}

After the initialisation steps (lines~\ref{alg:siman:init1}--\ref{alg:siman:init2}), the simulated annealing
algorithm enters a loop (lines~\ref{alg:siman:entertemploop}--\ref{alg:siman:exittemploop})
which is executed until the parameter $T$, called temperature, exceeds a
certain minimum $Tmin$.\footnote{We discuss the proposed simulated annealing parameters in Sec.~\ref{sec:evaluation}.} %\footnote{We comment on the precise values chosen for the parameters of the proposed simulated annealing algorithm in Sec.~\ref{sec:evaluation}.} 
The temperature is decremented at each iteration of 
the loop (line~\ref{alg:siman:tempdec}) with a step proportional to a parameter $\alpha$, which controls the 
speed of the algorithm and the granularity of the temperature values ($0 < \alpha < 1$).

For each temperature value $T$, the algorithm enters a second, stabilisation loop
(lines~\ref{alg:siman:enterinnerloop}--\ref{alg:siman:exitinnerloop}),
which explores the solution space (including the initial association matrix $\mathbf{x0}$
and the corresponding airtime allocation computed in line~\ref{alg:siman:init2}).
In line with standard practice, the inner loop is repeated a number of times $q$ proportional to 
the size of the problem the algorithm attempts to solve. In our case
we set $q = \left\lceil{NM/2}\right \rceil$.

Then for every iteration of the stabilisation loop, the algorithm checks if the current solution, given 
by the association matrix $\mathbf{x}$ and the corresponding airtime allocation
$\mathbf{t}$, is able to satisfy the offered load $\lambda_i$ of 
each client $i$ (line~\ref{alg:siman:stopcond}).
If the condition is satisfied, 
the algorithm terminates, returning 
$\mathbf{x}$ and $\mathbf{t}$.
This effectively means the solution is an optimum of
the problem (\ref{eq:goalload})--(\ref{eq:constrdecvarload}) and no other 
solution that does better would be found, given that all offered loads are
satisfied.

If the current solution is not optimal, the algorithm calls a perturbation
function to generate a new association matrix~$\mathbf{x'}$ (line~\ref{alg:siman:pert}).
The perturbation function, whose implementation is domain dependent, is a key component
of simulated annealing, as it 
defines the way in which the solution space is explored. In our case, this
generates a neighbour association matrix~$\mathbf{x'}$ starting from the current one $\mathbf{x}$.
To increase the chances of finding a good solution, the perturbation function must be designed to 
satisfy the \emph{irreducibility property}, i.e. for a number of
iterations that tends to infinity, the starting point of the simulated annealing
should not influence the final result~\cite{siman87}.
As such, %simulated annealing is a memoryless algorithm, in order to satisfy this property,
the perturbation function must introduce some form of randomness when generating a neighbour of
the current solution $\mathbf{x}$. However, it can also employ user-defined
rules to prioritise the generation of particular neighbours over other candidates.
As we will discuss later, for the association problem at hand, we propose a perturbation
function that prioritises the offloading of the bottleneck APs. %, as summarised by Algorithm~\ref{alg:perf}.

Given the new association matrix $\mathbf{x'}$ the algorithm uses the water filling procedure 
in Algorithm~\ref{alg:watfil} to compute the appropriate airtime allocation $\mathbf{t'}$ (line~\ref{alg:siman:wf}).
As we will detail below, this procedure implements an airtime-based water filling 
algorithm, which returns an airtime allocation,\footnote{Note that the
water filling procedure returns a vector $\mathbf{t}$, in which each element
$t_{i,j}$ represents the fraction of super-frame time allocated to client $i$ when
associated to AP $j$. This means the elements of $\mathbf{t}$
returned by Algorithm~\ref{alg:siman} must be translated into actual airtimes,
before being used in our analysis. This is
obtained through a simple conversion, i.e.
$t_{i,j} \leftarrow t_{i,j}(T_j - O_j)$.}
that satisfies the the max-min fairness criterion \cite{Leith:2012}.

Finally, the algorithm computes the energy 
 ${\Delta}E$ (line~\ref{alg:siman:delta}) %= U(\mathbf{x'}, \mathbf{t'}) - U(\mathbf{x}, \mathbf{t})$ 
as the difference between the utility obtained using the new
solution $(\mathbf{x'}, \mathbf{t'})$ and respectively
the previous one, according to (\ref{eq:goalload}). 
If the energy is positive, this means the new solution is better %than the previous 
and thus must be kept (line~\ref{alg:siman:keep}). Otherwise, we keep the new
solution only with a probability $e^{{\Delta}E/T}$ that depends on the current 
energy ${\Delta}E$ and temperature $T$.

The water filling based airtime allocation procedure invoked at line~\ref{alg:siman:init2} is outlined in Algorithm~\ref{alg:watfil}. First, this
computes the airtime $t^{\lambda}_{i,j}$ required to satisfy the
load $\lambda_i$ for each client $i$ and each AP $j$ (line~\ref{alg:watfil:reqair}),  
and then loops over all the APs (lines~\ref{alg:watfil:enteraploop}--\ref{alg:watfil:exitaploop}) as follows.
Three sets, $A_j$, $A_{j}'$, and $A_{j}''$ (line \ref{alg:watfil:defsets}) maintain 
the list of clients that must be associated to AP $j$ (i.e. $x_{i.j} = 1$) and whose corresponding
airtimes $t_{i,j}$ have not been set yet; the list of clients with allocated airtimes; and the list of clients who can only be allocated a fraction of the
airtime required to satisfy their load. In addition, the residual airtime still available at AP $j$ is maintained in $\hat{h}_j$ (line~\ref{alg:watfil:res}). 

\begin{algorithm}[!h]
%  \small
  \caption{WaterFilling ($\mathbf{x}, \mathbf{r}, \mathbf{h}, \mathbf{\lambda}$)}
	\begin{algorithmic}[1]
    \STATE Set $t_{i,j} = 0, \forall{j}$ and $f = 0$ \label{alg:watfil:init}
    \STATE Set $t^{\lambda}_{i,j} = h_j \lambda_{j}/r_{i,j}, \forall{i,j}$ \label{alg:watfil:reqair}
    \FOR[\emph{Loop on APs}]{$i=1$ \TO $i=M$} \label{alg:watfil:enteraploop}
    \STATE Define $A_j = \{i \mid x_{i,j} = 1, \forall{i}\}$, $A_{j}' = \varnothing$ and $A_{j}'' = \varnothing$ \label{alg:watfil:defsets}
    \STATE Set $\hat{h}_{j} = h_j$ \label{alg:watfil:res} \hfill \COMMENT{\emph{AP $j$ residual time}}
    \REPEAT[\emph{Loop on clients associated to AP $j$}] \label{alg:watfil:enterinner}
    \STATE Set $f = \hat{h}_j / (\|A_{j}\| + \|A_{j}''\| - \|A_{j}'\|)$ \label{alg:watfil:frac}
  \item[] \hspace*{-1em}\{\emph{Try to satisfy load requested by the "easier" client $\hat{i}$}\}
    \STATE Set $\hat{i} = i$, s.t. $t^{\lambda}_{i,j} = \min_{i \in A{j}} {t^{\lambda}_{i,j}}$
    \IF[\emph{Client load can be satisfied}]{$t^{\lambda}_{\hat{i},j} < f$} \label{alg:watfil:suc1}
      \STATE Set $t_{i,j} = t^{\lambda}_{\hat{i},j}$, $\hat{h}_{j} = \hat{h}_{j} - t^{\lambda}_{\hat{i},j}$
      \STATE Set $A_{j} = A_{j} \setminus \{\hat{i}\}$, $A_{j}' = A_{j}' \cup {\{\hat{i}\}}$ \label{alg:watfil:suc2}
    \ELSE[\emph{Client load can not be satisfied}] \label{alg:watfil:fail1}
      \STATE Set $A_{j} = A_{j} \setminus \{\hat{i}\}$, $A_{j}'' = A_{j}'' \cup {\{\hat{i}\}}$ 
    \ENDIF \label{alg:watfil:fail2}
    \UNTIL $A_{j} \ne \varnothing$ \label{alg:watfil:exitinner}

    \STATE Set $t_{i,j} = f, \forall{i \in A_{j}''}$ \COMMENT{\emph{Set time $f$ for unsatisfied clients}}\label{alg:watfil:equal} 

  \ENDFOR \label{alg:watfil:exitaploop}
	\STATE Return $\mathbf{t}$
	\end{algorithmic}
	\label{alg:watfil}
\end{algorithm}

Then an inner loop (lines~\ref{alg:watfil:enterinner}--\ref{alg:watfil:exitinner}) first computes the 
fraction of equal airtime $f$ that can be assigned to each client $i$ (line \ref{alg:watfil:frac}) and 
selects from $A_j$ the index of the client $i$ whose corresponding $t^{\lambda}_{i,j}$ is the minimum among all set members; 
i.e. it searches the client whose load request is the easiest to satisfy.
If the time required to satisfy client $i$'s load ($t^{\lambda}_{i,j}$)
is less than the fraction of airtime available to that client ($f$), it means  
AP $j$ can completely satisfy that request (lines~\ref{alg:watfil:suc1}--\ref{alg:watfil:suc2}). Therefore  
the airtime allocated to client $i$ associated to AP $j$ is set to $t^{\lambda}_{i,j}$, the residual time
 $\hat{h}_j$ available at AP $j$ is updated, the current index $i$ is removed from set $A_j$ and 
 inserted in $A_{j}'$.
If instead the fraction of available airtime ($f$) is insufficient to satisfy the 
load request, the current index $i$ is removed from $A_j$ and 
 inserted in $A_{j}''$ (lines~\ref{alg:watfil:fail1}--\ref{alg:watfil:fail2}).
 Finally, an equal slice of the residual airtime is assigned to
each client in $A_{j}''$ (line~\ref{alg:watfil:equal}).

\begin{algorithm}[!b]
%  \small
  \caption{$\mathbf{x'}=Perturbate(\mathbf{x}, \mathbf{t}, \mathbf{\lambda}, p)$}
	\begin{algorithmic}[1]
	\STATE Let $\mathbf{x'} = \mathbf{x}$
    \STATE Builds sets $B^-$ and $B^+$ as defined by (\ref{eq:setbn}).
	\STATE Set $y \text{ to a random value} \in [0,1)$
    \IF[\emph{irreducibility property}]{$y < p$} \label{alg:perf:rand1}
    \STATE Choose random $(i,j), \text{ s.t. } x_{i,j} = 1$
    \STATE Choose random  $\hat{j}, \text{ s.t. } \hat{j} \neq j, r_{i,\hat{j}} > 0$ \label{alg:perf:rand2}
    \ELSE[\emph{Bottlenecks offloading}]
    \IF{$B^- \neq \varnothing$} \label{alg:perf:nb1}
      \STATE Choose random $(i,j), \text{ s.t. } x_{i,j} = 1, j \in B^+$
      \STATE Choose random  $\hat{j}, \text{ s.t. } \hat{j} \in B^-, r_{i,\hat{j}} > 0$ \label{alg:perf:nb2}
    \ELSE
      \STATE Choose random $(i,j), \text{ s.t. } x_{i,j} = 1, B_j \neq \min_{j^*} B_{j^*}$\label{alg:perf:b1}
			\STATE Choose random  $\hat{j}, \text{ s.t. } r_{i,\hat{j}} > 0, B_{\hat{j}} < B_j$ \label{alg:perf:b2}
		\ENDIF
	\ENDIF
	\STATE Set $x'_{i,j} = 0$
	\STATE Set $x'_{i,\hat{j}} = 1$
	\end{algorithmic}
	\label{alg:perf}
\end{algorithm}

We conclude this section with a brief description of the perturbation function summarised in
Algorithm~\ref{alg:perf}, %~\footnote{To simplifying notation, in Perturbate function we assume always that 
%$ 1 \le i \le N, 1 \le \hat{i} \le N, 1 \le j \le M$ and $1 \le \hat{j} \le M$.}
whose key objective is to prioritise the offloading of the network bottlenecks.
Given an association matrix $\mathbf{x}$ and
an airtime allocation matrix $\mathbf{t}$, we define the bottleneck value $B_j$ for each AP $j$ as:
\begin{equation}
  \label{eq:bj}
  \begin{split}
   B_j = B^{load}_j - B^{time}_j
  \end{split}
\end{equation}
where
\begin{equation}
  \label{eq:bload}
  B^{load}_j = \sum_{i=1}^N x_{i,j}\lambda_i - \sum_{i=1}^N x_{i,j}r_{i,j}t_{i,j}
\end{equation}
\begin{equation}
  \label{eq:btime}
	%\nick{B^{time}_j = h_j - \sum_{i=1}^N x_{i,j}r_{i,j}t_{i,j}}
	B^{time}_j = (T_j - O_j) - \sum_{i=1}^N x_{i,j}r_{i,j}t_{i,j}
\end{equation}

\noindent $B^{load}_j \ge 0$ is the difference between the total load request of clients
$i$ associated to AP $j$ and the amount of requested load AP $j$ is able to
satisfy. We have that $B^{load}_j = 0$ when the AP can completely satisfy the
load request. Instead, $B^{time}_j \le 0$ is the difference between the total
airtime available in a super--frame at AP $j$ and the airtime consumed by
the associated clients.  We also have that $B^{time}_j = 0$ when $B^{load}_j >
0$ and that $B^{time}_j \ge 0$ when $B^{load}_j = 0$.  From
(\ref{eq:bj})--(\ref{eq:btime}) it is easy to observe that AP $j$ is a network
bottleneck when $B_j \ge 0$ and it is not when $B_j < 0$. The perturbation function starts by building the following sets:
\begin{align}
\label{eq:setbn}
\begin{aligned}
  B^- = \{j \mid B_j < 0\}, \forall{j};  \\
  B^+ = \{j \mid B_j \ge 0\}, \forall{j},
\end{aligned}
\end{align}
where $B^-$ and $B^+$ contain indexes $j$ of the APs that are not, and respectively are bottlenecks. 
 
Then the algorithm selects a random client $i$ with probability $p$, and moves it to a different AP (lines~\ref{alg:perf:rand1}--\ref{alg:perf:rand2}).
This introduces the randomness required for satisfying the \emph{irreducibility property}.
On the other hand, the algorithm tries to reduce network bottlenecks with a probability $1-p$,
%If APs that are not bottlenecks exist, the algorithm moves 
moving a random client from 
a bottlenecked AP to one with available resources (lines~\ref{alg:perf:nb1}--\ref{alg:perf:nb2}).
%Instead, if 
If all the APs are bottlenecks, the algorithm moves a random client from an AP~$j$ to a different AP $\hat{j}$ with $B_{\hat{j}} < B_j$. %(lines~\ref{alg:perf:b1}--\ref{alg:perf:b2}). 

\section{Performance Evaluation}
\label{sec:evaluation}

In this section we evaluate the performance of the proposed association control
algorithms under different traffic load conditions, considering enterprise environments
where clients are within coverage of multiple APs and encounter different link
qualities. We compare the performance of our solutions in terms of individual and total throughput, as well as network utility, with that of the 802.11ad
standard SNR-based policy, \rev{that of a greedy association algorithm whereby APs take turns in associating the nearest clients,} and respectively that of the recent DAA scheme~\cite{Athanasiou:2015}.
We further compare the performance of
all approaches in small topologies, with that of the global optimum we obtain through
exhaustive search. Subsequently, we evaluate the average performance attained by the proposed and existing schemes, when stations' position evolve according to a random
waypoint mobility model. Lastly, we demonstrate the short runtime of the
combined simulated annealing and water filling mechanism we propose.

\subsection{Simulation Environment}
\label{sec:simenv}

As mmWave platforms suitable for large scale experimentation are yet to appear
\cite{Sur:2015}, to evaluate our proposal we develop an NS-3\footnote{NS-3
discrete-event network simulator, {\ttfamily{https://www.nsnam.org/}}}
simulation module that implements closely the 802.11ad protocol
details~\cite{802.11ad}.
We employ the isosceles cone antenna pattern defined in~\cite{geofencing}, which
can be steered to an arbitrary angle and whose elevation and azimuth are
functions of the gain.  In our simulations we configure the antenna gain to
15dB,
%which corresponds to a $\sim$41$^{\circ}$ beam width.\footnote{Modern 60GHz
%antenna arrays will be expected to form beams of $\sim$6$^{\circ}$ width or
%less, making the \emph{pseudo-wired} assumption practical~\cite{flyways}.} For
%what regards the antenna, which is a key aspect for ensuring the validity of
%the \emph{pseudo-wired} point-to-point connections assumption, we adopt the
%model used in the Geo-fencing project~\cite{geofencing}: the antenna pattern is
%modelled as an isosceles cone that can be steered to an arbitrary angle and
%whose elevation and azimuth are functions of the gain. For our simulations we
%configured the antenna gain to 15dBm with corresponding beamwidth\footnote{Note
%that modern 60GHz antenna technologies might in practive have a granularity of
%$\sim 6^{\circ}$ or less, making the \emph{pseudo-wired} assumption even easier
%to satisfy~\cite{flyways}.} of $\sim 41^{\circ}$.
and note this model was used successfully by Halperin \emph{et al.}, who also
developed a basic NS-3 implementation of the 802.11ad physical layer
\cite{flyways}. We used their code as a starting point for our own
implementation, adding the missing Directional Multi-Gigabit (DMG) PHY
capabilities and the scheduled Service Period (SP) based MAC within the Data Transmission Interval (DTI), as
illustrated in Fig.~\ref{fig:superframe}. Similarly to~\cite{flyways},
we compute the SINR for different parts of the frames, combining the power from
multiple interferers and noise, and model free space propagation using Friis
law. \rev{We consider indoor deployments with ceiling mounted mmWave APs, in which simulation results we obtain reveal that the collision rates are below 0.003. This confirms the validity of the pseudo-wired link assumption used.}
To estimate the Bit Error Rate (BER), we used the receiver sensitivity
specified by the standard (table 21-3 in~\cite{802.11ad}). \rev{While advanced channel modelling is outside of this work, the assumptions we make are appropriate for indoor scenarios with finite coverage and number of access points. By allowing for arbitrary SNRs on client--AP links, we decouple the client association and airtime allocation tasks at the core of our contribution, from the environment dependent (e.g. reflections, obstacles, etc.) PHY channel properties already documented~\cite{Saha2017,Nitsche:2015}.}

% which indicates the sensitivity for each rate and coding as the (SINR) power
% level down to which a device much successfully receive more than 99\% of
% 4096-byte frames sent using that rate.
%During simulations all nodes employ OFDM PHY rates. %and 1mW transmit power.

%As in~\cite{flyways} we used the standard NS-3 modelling technique for
%computing the SINR levels for different parts of the frames. This simple SINR
%model, which adds together the power from multiple interferers, combines it
%with noise, and compares it with signal strength, has been found appropriate
%with directional antennas~\cite{dirc}. Then, for estimating the Bit Error Rate
%(BER), we used the receiver sensitivity specified by the table 21-3 of the
%802.11ad standard~\cite{802.11ad} which indicates the  sensitivity for each
%rate and coding as the (SINR) power level down to which a device much
%successfully receive more than 99\% of 4096-byte frames sent using that rate.
%Finally, we modelled the signal free space propagation using the Friis' law and
%during simulations all the nodes were configured for using the faster OFDM PHY
%MCSs and a transmission power of 1mW.

%In terms of MAC protocol implementation, we develop the SP based channel access mechanism within the Data Transmission Interval (DTI), as illustrated in Fig.~\ref{fig:superframe}.  
Recall that an SP is
allocated for contention-free access between a client and an AP,
%The node identified as the source of the service period accesses the channel
%separating frames by $3{\mu}s$ (a SIFS) and
without carrier sensing. We implement A-MPDU aggregation for efficient
transmission (up to 64 frames in a single A-MPDU) and the Block Ack mechanism.
%We didn't implement the beaconing and the antenna training protocols but we
%took into account their overhead by forcing a silence period of ${O_j}$ seconds
%in each beacon interval. In particular, for our simulations we set $T_j = 102.4
%ms$ and $O_j = 0.1{T_j}$ for each $j$. We also ignored the antenna pattern
%switching overhead and we assumed  the capability of the sender and transmitter
%to steer their antenna beams for pointing at each other.

Since the standard does not specify a beamforming training mechanism, we use a
conservative 10\% overhead for this procedure, noting that performance gains
will remain unchanged with other values, and assume negligible beam switching
overhead.  The standard neither mandates a specific rate control algorithm,
therefore we implement a rate controller that selects the best transmission MCS
based on the SINR measured at the receiver. We assume that each clients $i$ can
estimate the rates $r_{i,j}$ towards each AP~$j$ by measuring the SINR of the
beacons the APs transmit periodically on each antenna sector.  Finally, we
extend \mbox{NS-3} to enable automatic generation of network topologies and
rapid configuration of 802.11ad WLANs.\footnote{\rev{The source code of our NS-3
simulation module is publicly available at {\ttfamily
\url{https://bitbucket.org/uoeunibs/11ad-for-ns3}}}}

%but these components are straightforward and we omit them due to lack of space.

% \nick{In the simulations we consider two scenarios: the first one is an indoor
% 24$\times$20 indoor deployment where 4 ceiling mounted mmWave APs are placed in
% a square layout, as depicted by the black squares in Fig.~\ref{fig:pmf4ap10sta}.
% In this area 10 client stations are distributed extracting their positions using
% the probability mass functions reported in the same figure. We simulated in
% total 30 different deployments of the 10 client stations, each one repeated
% three times, and the results reported in the following sections referring to
% this scenarios are the averages computed over the 30 deployments. The black
% circles in Fig.~\ref{fig:pmf4ap10sta} report an example of client deployments
% used in simulations. In this scenario the clients transmit at PHY rates between
% 693Mb/s and 6.756Gb/s, depending on their relative distance to the APs within
% range~\cite{802.11ad}. Finally, for this scenario, given the small amount of APs
% and client stations we are able to compute also the optimal solution of problems
% \ref{eq:goal} and \ref{eq:goalload} using an exhaustive search approach.
% }
\begin{figure}[t]
\centering
 \includegraphics[width=0.6\columnwidth]{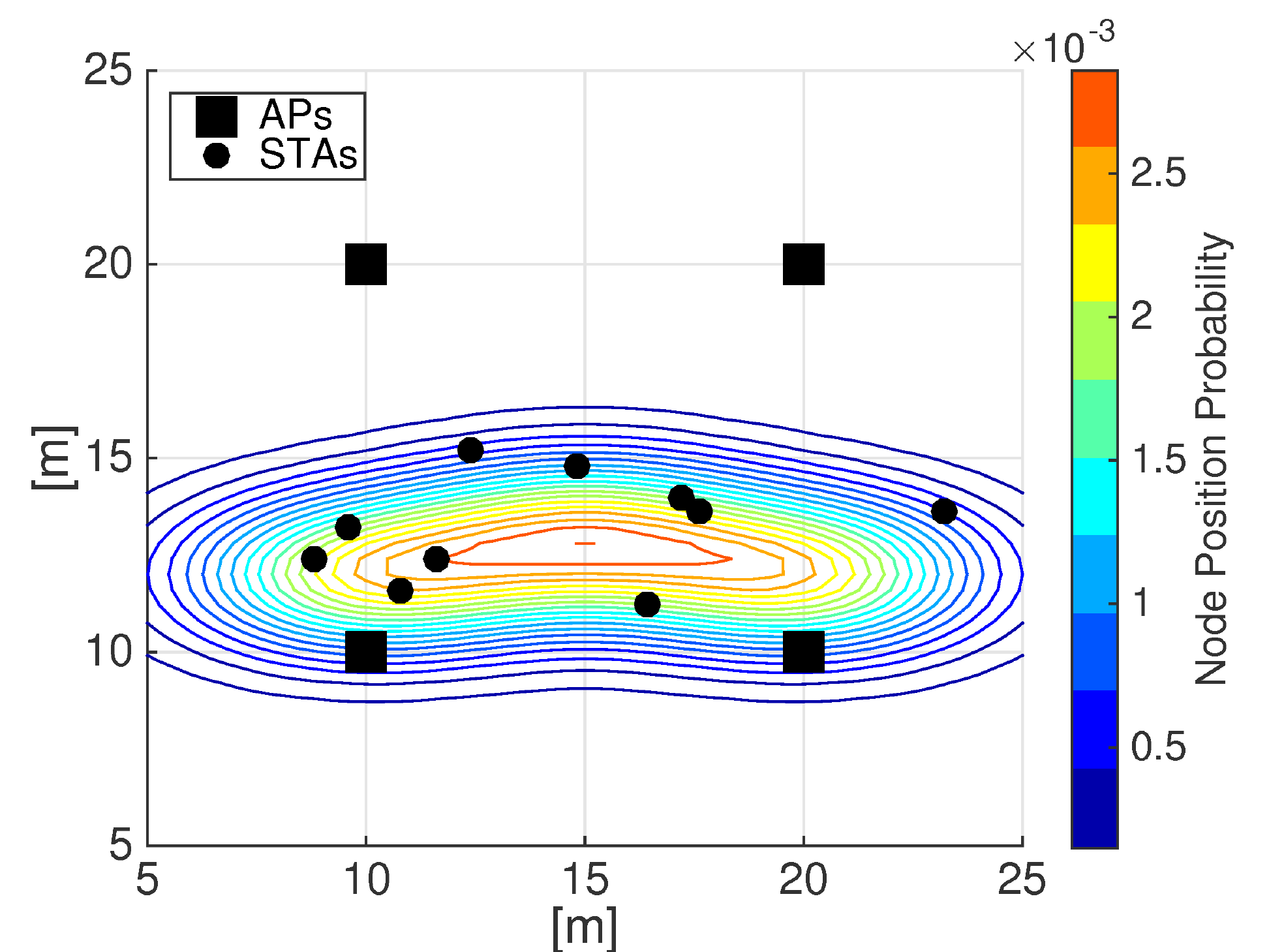}

 \caption{Probability mass function used for extracting the positions of the
	stations in 4APs/10STAs scenarios used for evaluation. Probability
	increases from blue ($\approx0.2 \cdot 10^{-3}$) to red ($\approx2.9
	\cdot 10^{-3}$).}

 \label{fig:pmf4ap10sta}
\end{figure}

For evaluation purposes we consider two deployment scenarios. The first is an
indoor 24m$\times$20m area, where four ceiling mounted mmWave APs are placed in
a square layout, as depicted by the black squares in Fig.~\ref{fig:pmf4ap10sta}.
Ten client stations are randomly distributed by extracting their positions using
the probability mass function shown
as a contour plot in the same figure. The probability decreases from red
to blue, with the maximum probability of $\approx2.9\cdot10^{-3}$ centered at
coordinate $(15, 13)$, and the edge contour line corresponding to a
probability of $\approx0.2\cdot10^{-3}$. Note that the probability
is never zero and there is a low chance to extract positions outside the outer contour line.

\begin{table}
	\center
	\scriptsize
\begin{tabular}{ |l|l| }
	\hline
	\multicolumn{2}{|c|}{\textbf{Simulation Environment}} \\
	\hline
	        Antenna Model & Cone pattern \\
	\hline
	        Antenna gain & 15 dBm \\
	\hline
		Antenna beamwidth & $\sim 41^{\circ}$ \\
	\hline
		Channel access & Service Periods based\\
	\hline
		Propagation model & Free space (Friis law)\\
	\hline
		Bit Error Rate &  Receiver sensitivity (table 21-3 in~\cite{802.11ad})\\
	\hline
		A-MPDU aggregation & 64 frames\\
	\hline
		DMG PHY & OFDM up to 6.756Gb/s \\
	\hline
		Beamforming training overhead & 10\% \\
	\hline
		Rate controller & SNR-based \\
	\hline
		Traffic type & 1470-byte UDP downlink packets \\
	\hline
		\multirow{3}{*} {Evaluated scenarios} & Backlogged Traffic \\
		& Finite Load Conditions \\
		& User Mobility \\
	\hline
		\multirow{3}{*} {Considered metrics} & Individual throughput \\
		& Total throughput \\
		& Network utility \\
	\hline
		\multirow{3}{*} {Evaluated algorithms} & Proposed solutions \\
		& SNR-based (Eq. and Water fill airtime) \\
		& DAA scheme \\
	\hline
		\multirow{2}{*} {Deployments} & 24m×20m area, 4 APs, 10 clients \\
		& 30m×30m area, 9 APs, 30 clients \\
	\hline

\end{tabular}
	\caption{Simulation environment summary}
	\label{tab:simenv}
\end{table}

To obtain average results of
measured throughputs with good statistical significance, we consider a total of
30 different deployments of this type, and compute the average individual
throughput in each case over three simulation runs (i.e. 90 simulations in
total). The black circles in Fig.~\ref{fig:pmf4ap10sta} show an example of
client locations used in simulation. The clients transmit at PHY rates between
693Mb/s and 6.756Gb/s, depending on their relative distance to the APs within
range~\cite{802.11ad}. Given the small number of APs and stations in this first
scenario, we will also compute the optimal solution of the problems
(\ref{eq:goal}) and (\ref{eq:goalload}) using exhaustive search.

% \nick{In the second scenario} we consider an indoors 30m$\times$30m
% deployment where 9 ceiling mounted mmWave APs are placed on a grid layout,
% as depicted in Fig.~\ref{fig:SNRassoc}. In this area 30 client stations are
% distributed randomly and transmit at PHY rates %that vary
% between 1.7325 Gb/s and 6.237 Gb/s, depending on their relative distance to the
% APs within range~\cite{802.11ad}.
% \nick{For this scenario, the results we report are the averages computed over
% ten repetition of the simulation. Due to the number of APs and client stations
% in this topology it was not possible to use exhaustive search for computing the
% optimal solutions.}

In the second scenario we consider a more complex 30m$\times$30m
indoors deployment, where 9 ceiling mounted mmWave APs are placed on a grid layout,
as depicted in Fig.~\ref{fig:SNRassoc}, and serve 30 randomly placed client stations. 
Here, the results we report are the averages computed over ten repetition of the simulation. Given the number of APs and client stations
in this topology, finding the absolute optimum through exhaustive search is no longer feasible. In all scenarios APs transmit 1470-byte UDP packets in the downlink. Table~\ref{tab:simenv} summarises the simulation parameters and scenarios we consider for evaluation.
%transmission.

\begin{figure*}[!t]
\centering
	\begin{subfigure}[t]{0.74\textwidth}
	\centering
	\includegraphics[width=\textwidth]{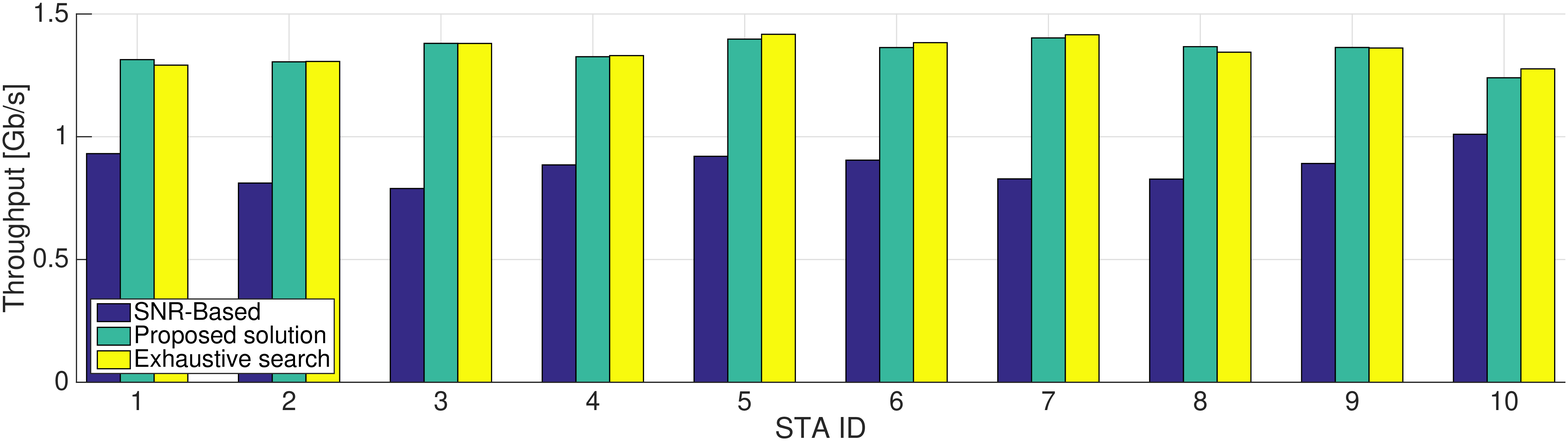}

	\caption{Individual throughputs achieved with the SNR-based and
		proposed methods.}

	\label{fig:SatPerf_4ap10sta}
	\end{subfigure}
	\begin{subfigure}[t]{0.25\textwidth}
	\includegraphics[width=\textwidth]{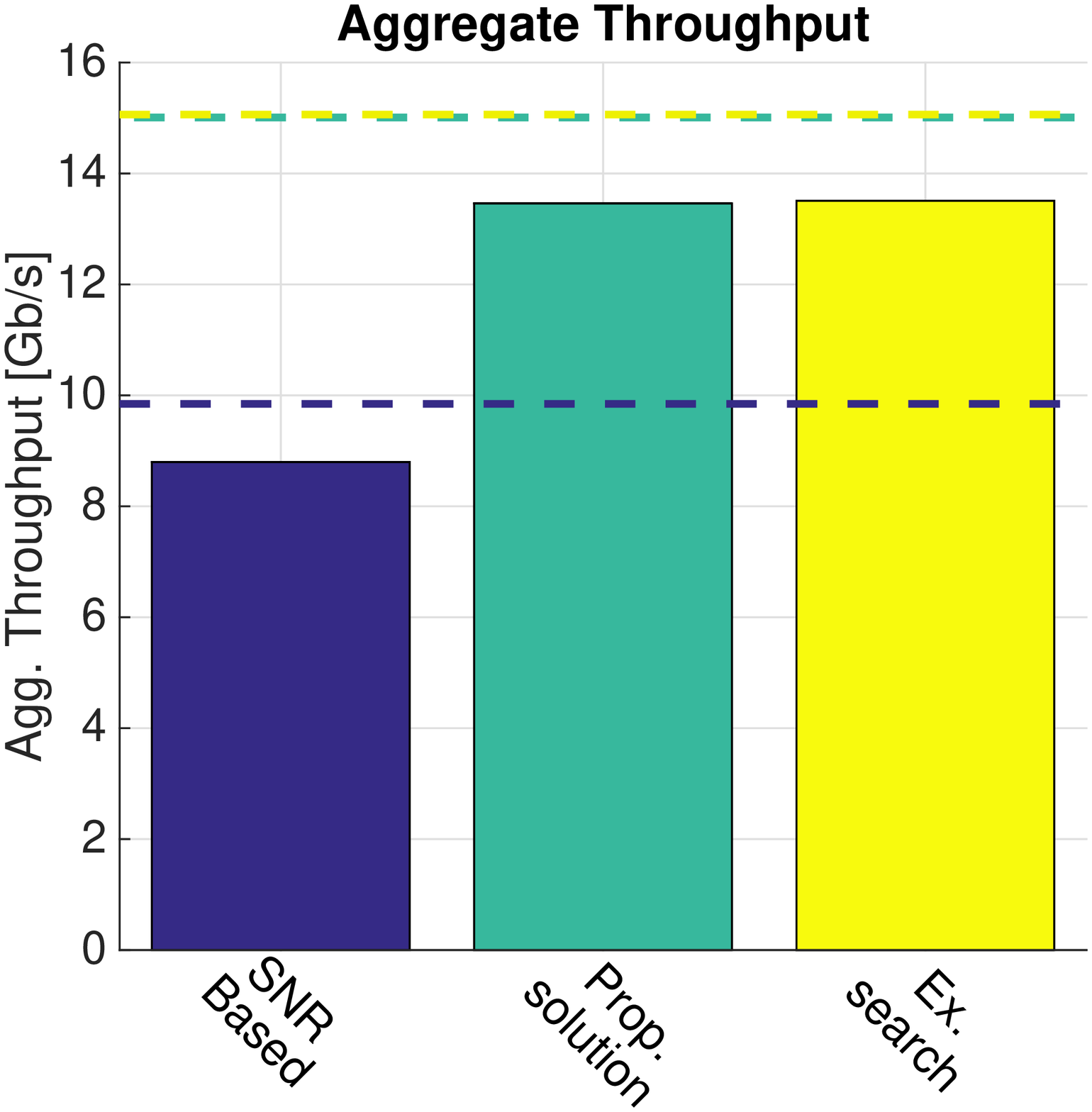}
		\caption{Aggregate~throughput comparison.}
	\label{fig:agg_tp_sat_4ap10sta}
	\end{subfigure}

	\caption{Enterprise mmWave network with 4 APs positioned on a grid
	and 10 client stations deployed using the pmf in
	Fig.~\ref{fig:pmf4ap10sta}.  Client throughput performance attained with
	the SNR-based policy, the proposed association control mechanism and
	respectively exhaustive search. All clients are backlogged (saturation
	conditions) and equal airtime allocation is performed at each AP. Theoretical maximum shown with dashed lines.
	Simulation results.}

\label{fig:Sat_4ap10sta}
% \vspace*{-1.25em}
\end{figure*}

\begin{figure*}[!ht]
\begin{adjustwidth}{-1cm}{-1cm}
%\hspace*{-1.5in}
\centering
	\begin{subfigure}[t]{0.35\textwidth}
	\centering
	\includegraphics[width=\textwidth]{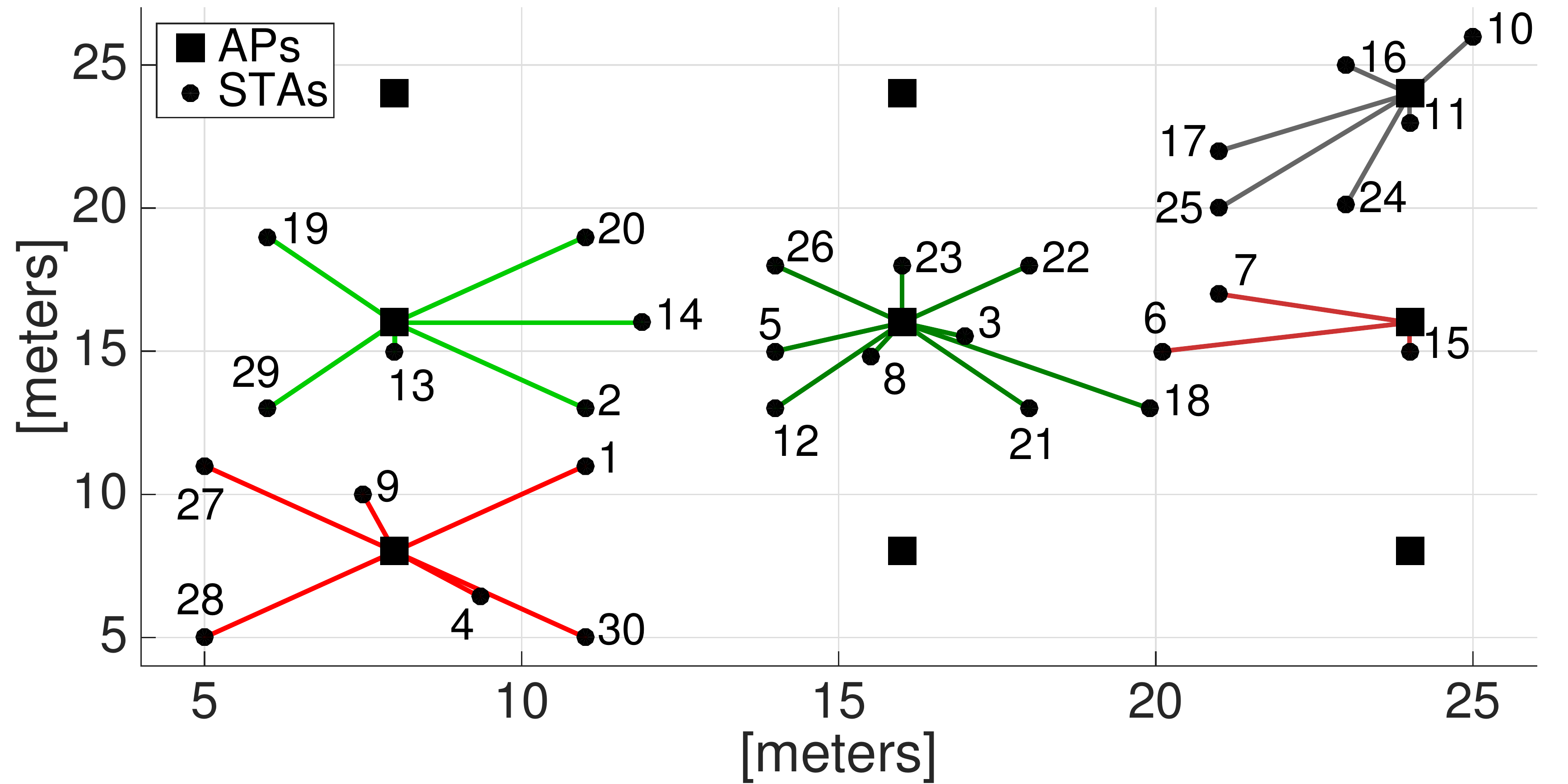}
	\caption{Client--AP links established using highest SNR policy.}
	\label{fig:SNRassoc}
	\end{subfigure}
	\hspace*{0.45em}
	\begin{subfigure}[t]{0.35\textwidth}
	\centering
	\includegraphics[width=\textwidth]{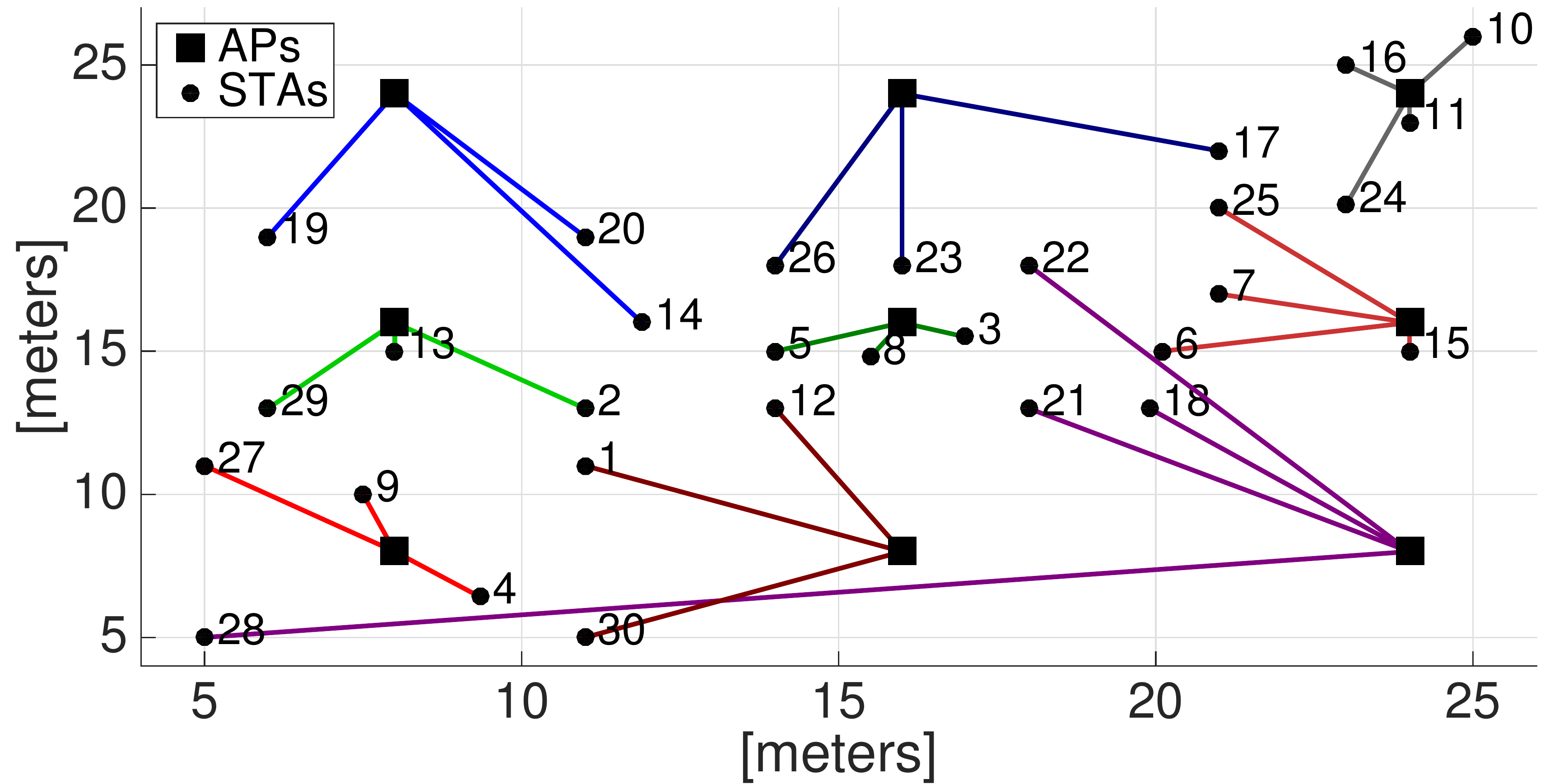}
	\caption{\rev{Client--AP links obtained via greedy association.}}
	\label{fig:GreedyAssoc}
	\end{subfigure}	
	\hspace*{0.45em}
	\begin{subfigure}[t]{0.35\textwidth}
	\centering
	\includegraphics[width=\textwidth]{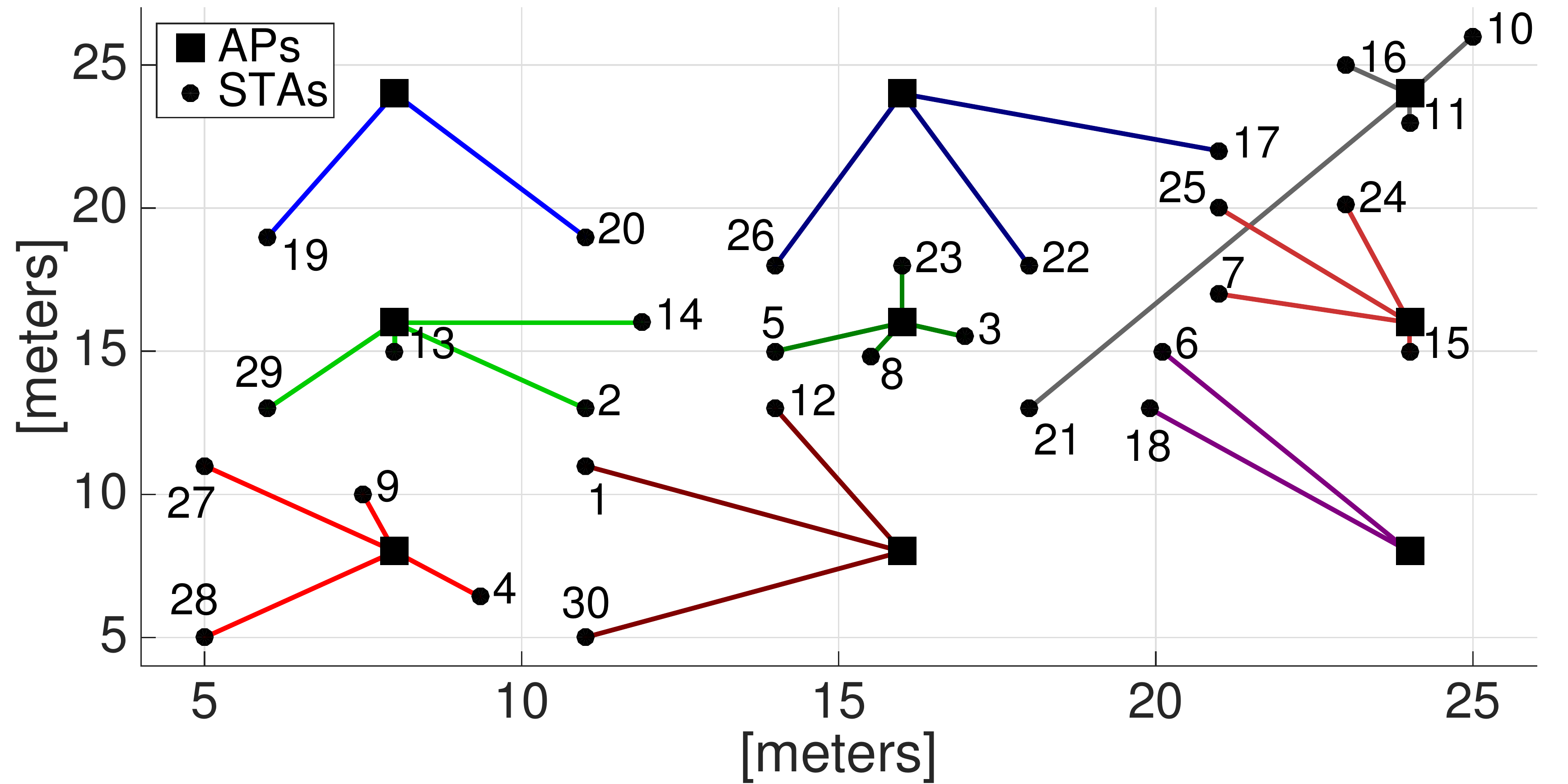}
	\caption{Client--AP links enforced by the proposed association algorithm.}
	\label{fig:SatAssoc}
	\end{subfigure}	
%\vspace*{1em}
	\begin{subfigure}[t]{.85\textwidth}
	\centering
	\includegraphics[width=\textwidth]{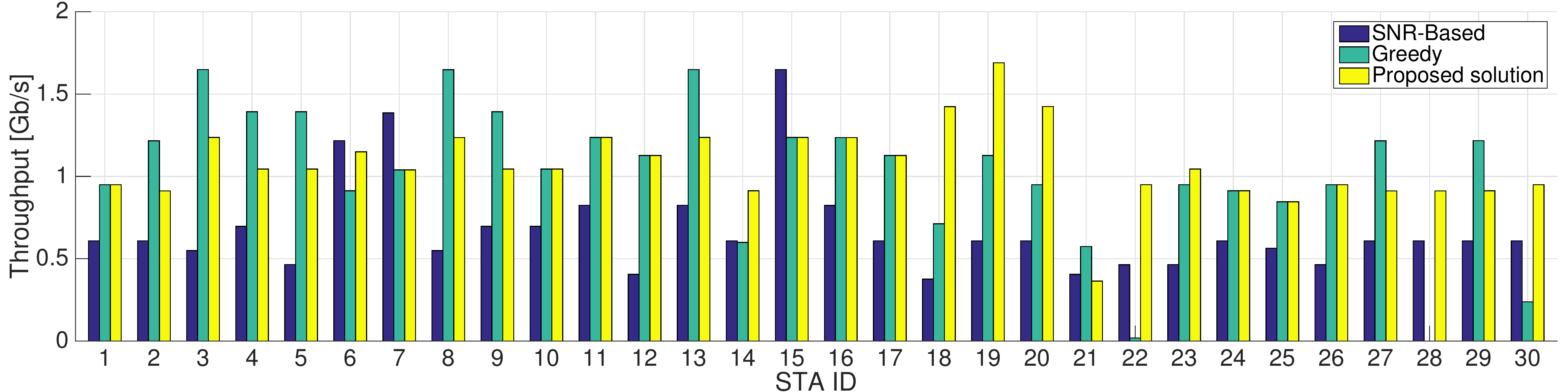}
	\caption{\rev{Individual throughputs achieved with the SNR-based, greedy, and proposed methods.}}
	\label{fig:SatPerf}
	\end{subfigure}
	\begin{subfigure}[t]{0.25\textwidth}
	\includegraphics[width=\textwidth]{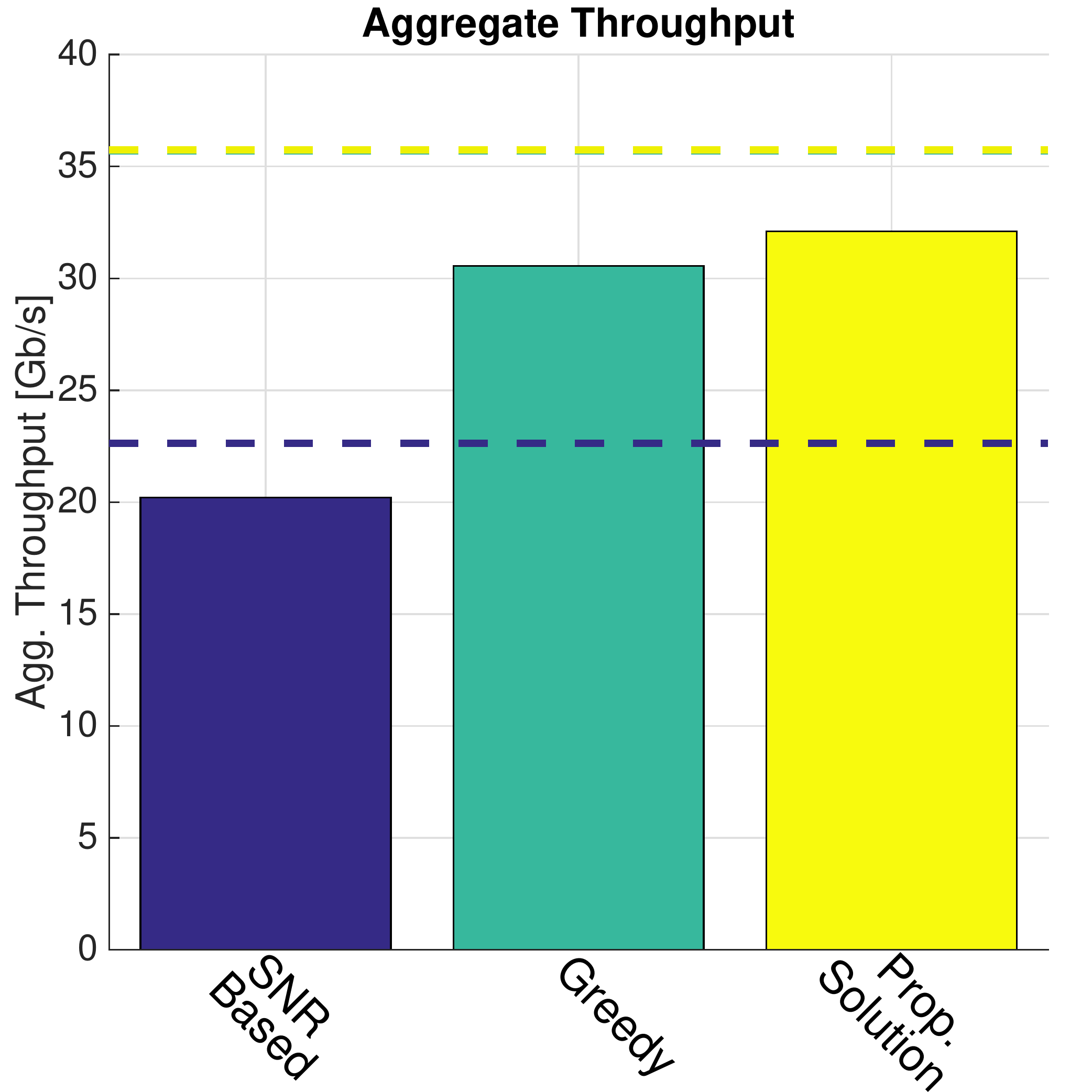}
	\caption{\rev{Aggregate throughput comparison.}}
	\label{fig:agg_tp_sat}
	\end{subfigure}

\caption{Enterprise mmWave network with 9 APs positioned on a grid and 30 client
	stations deployed randomly. Client associations enforced by and
	throughput performance attained with the SNR-based policy, \rev{the greedy association approach,} and
	respectively the proposed association control mechanism. All clients are
	backlogged (saturation conditions) and equal airtime allocation is
	performed at each AP. Theoretical maximum shown with dashed lines. Simulation results.}

\label{fig:Sat}
% \vspace*{-1.25em}
\end{adjustwidth}
\end{figure*}

\subsection{Backlogged Traffic}

We first investigate the performance of the association control scheme we
propose for saturation scenarios, which involves solving the relaxed utility
maximisation problem and executing an iterative rounding algorithm
(Sec.~\ref{sec:satcond}). For the scenario with 4 APs, we compare the
behaviour of our approach against the standard's default SNR-based association
control policy and against the optimal solution of problem (\ref{eq:goal}) obtained through exhaustive search.
For the scenario with 9 APs, we compare the throughput performance of our approach only against the standard's default policy \rev{and the greedy association previously described}.
For a fair comparison, in all cases we allocate equal airtime to all clients of each AP (proportional fairness), with any association schemes.

We first illustrate in Fig.~\ref{fig:SatPerf_4ap10sta} the individual station
throughputs achieved in the first scenario, where observe that with our scheme
all the clients attain superior (up to 74\% higher) throughput, as compared to
the SNR-based policy. Overall, the proposed solution achieves a 1\% utility
gain, which corresponds to a 53\% increase in aggregate network throughput,
which we show in Fig.~\ref{fig:agg_tp_sat_4ap10sta}. In addition, we run
exhaustive searches over the solution space, to quantify the difference between
the solution found by solving the relaxed problem and running the proposed
iterative rounding algorithm, and the absolute optimum (yellow bars in
Fig.~\ref{fig:Sat_4ap10sta}). Observe that this difference is negligible -- our
method attains small throughput gains (up to 1.7\%) for a subset of stations,
and slightly lower (up to 2.9\%) values for others. These small
differences are mainly due to practical channel conditions that lead to frame
collisions or reception failure, which are overlooked by the theoretical
problem formulation. Overall, the utility loss of our method is only 0.0002\%,
which corresponds to an aggregate network throughput loss of 0.0035\%. To add further perspective, in Fig.~\ref{fig:agg_tp_sat_4ap10sta} we also plot with dashed lines the theoretical maximum throughput obtained numerically as a function of the optimal solution returned by each approach considered, noting only a ~10\% difference in all cases.

Turning attention to the 9 APs scenario, to gain a deeper understanding of the individual and aggregate throughput performance, we also illustrate the Client--AP links established using the proposed, SNR-based, \rev{and greedy} methods. Note that using the SNR-based approach, client stations are clustered around the nearest AP, irrespective of their local density, as depicted in
Fig.~\ref{fig:SNRassoc}. In effect, they can employ superior PHY bit rates, but
often share a single AP with many others (e.g. 9 clients connected to the AP in
the centre of the grid), while a subset of APs remain unutilised (4 APs in this
case). \rev{The greedy approach fairs better as it distributes the load among access points, yet this is performed na\"ively, which leads to half (or less) the throughput performance of the SNR-based strategy for some clients (e.g. clients 22 and 30).} In contrast, our approach distributes clients among all APs with the goal of maximising network utility (sum of log throughputs), as shown in
Fig.~\ref{fig:SatAssoc}. As such, clients may transmit at lower PHY rates, but
are allocated more airtime, which translates into higher throughput.

Indeed, Fig.~\ref{fig:SatPerf} demonstrates that with our scheme the majority of
clients attain superior throughput performance (even $>$100\% higher), while
only a small fraction experience a minor performance hit, as compared to the
SNR-based policy. Overall, our proposal attains a 2.5\% utility gain, which
corresponds to a 60\% gain in the aggregate network throughput, as illustrated
in Fig.~\ref{fig:agg_tp_sat}.

We conclude that \textbf{our scheme achieves substantial performance
improvements under backlogged traffic conditions}. In what follows, we
investigate the performance of the mechanism we introduced in
Sec.~\ref{sec:finiteload} for finite load conditions.

\subsection{Finite Load Conditions}

Next we extend the performance analysis to finite load conditions, i.e. when stations have limited
traffic demands (offered load). We consider the same indoor topologies,
but with heterogeneous offered loads, whereby demand varies between
460Mb/s and 2.3Gb/s in the 4 APs scenario, and between 500Mb/s and
1.25Gb/s in the 9 APs scenario. Recall that maximising network utility in such circumstances,
requires not only to find the appropriate association matrix, but also the
airtimes allocated to each station at each AP (see Sec.~\ref{sec:finiteload}). 

For comparison, we analyse the performance of the proposed simulated annealing
and water filling based solution (``Proposed SA-WF solution'') and
that of:

\begin{itemize}
 \item SNR-based association and equal airtime (EA) allocation;
 \item SNR-based association with airtime water filling (WF); %based airtime allocation;
 \item The distributed DAA algorithm proposed in \cite{Athanasiou:2015};\footnote{Athanasiou \emph{et al.} only address the association problem and do not consider airtime allocation \cite{Athanasiou:2015}. As such, we use their approach with the same equal airtime (proportional fair) allocation strategy.}
 \item The optimal solution obtained through exhaustive search (only for the 4 APs topology).
\end{itemize}

\noindent Our simulated annealing algorithm works with the following parameters: $T0 = 20,
\alpha = 0.7, q = NM/2, Tmin = 0.001$, $p=0.1$, which we empirically found to
yield good performance, as we will show in Sec.~\ref{sec:runtime}.

\begin{figure*}[t]
%\vspace*{-0.5em}
\centering
	\begin{subfigure}[t]{0.74\textwidth}
	\centering
	\includegraphics[width=\textwidth]{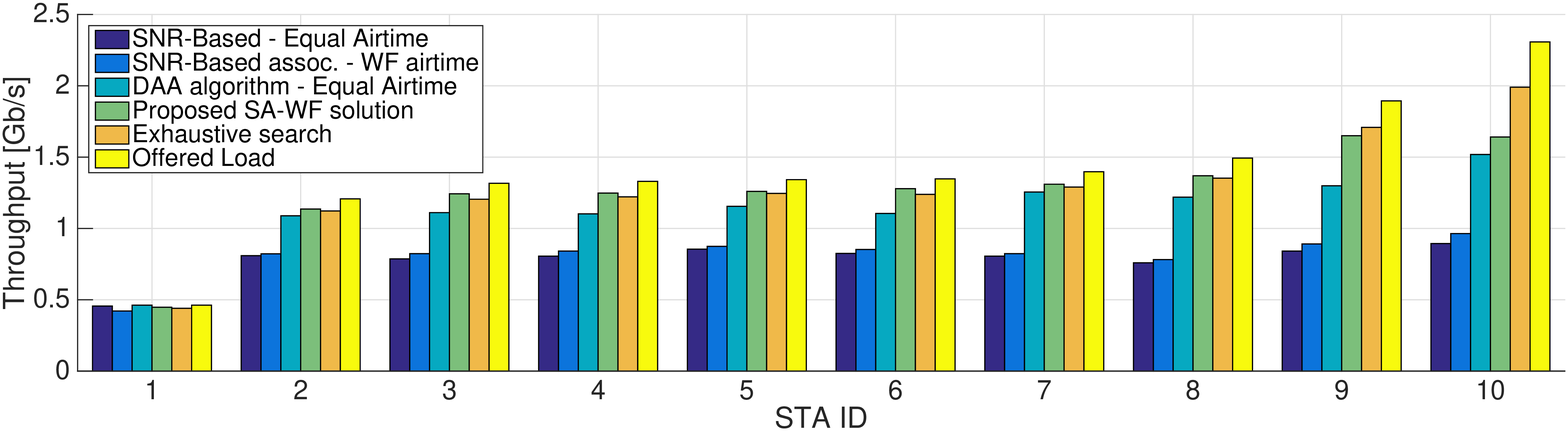}
	\caption{Offered loads and throughputs with SNR-based approaches, DAA, and the proposed solution.}
	\label{fig:LoadIndividual_4ap10sta}
	\end{subfigure}
	\begin{subfigure}[t]{0.25\textwidth}
	\includegraphics[width=\columnwidth]{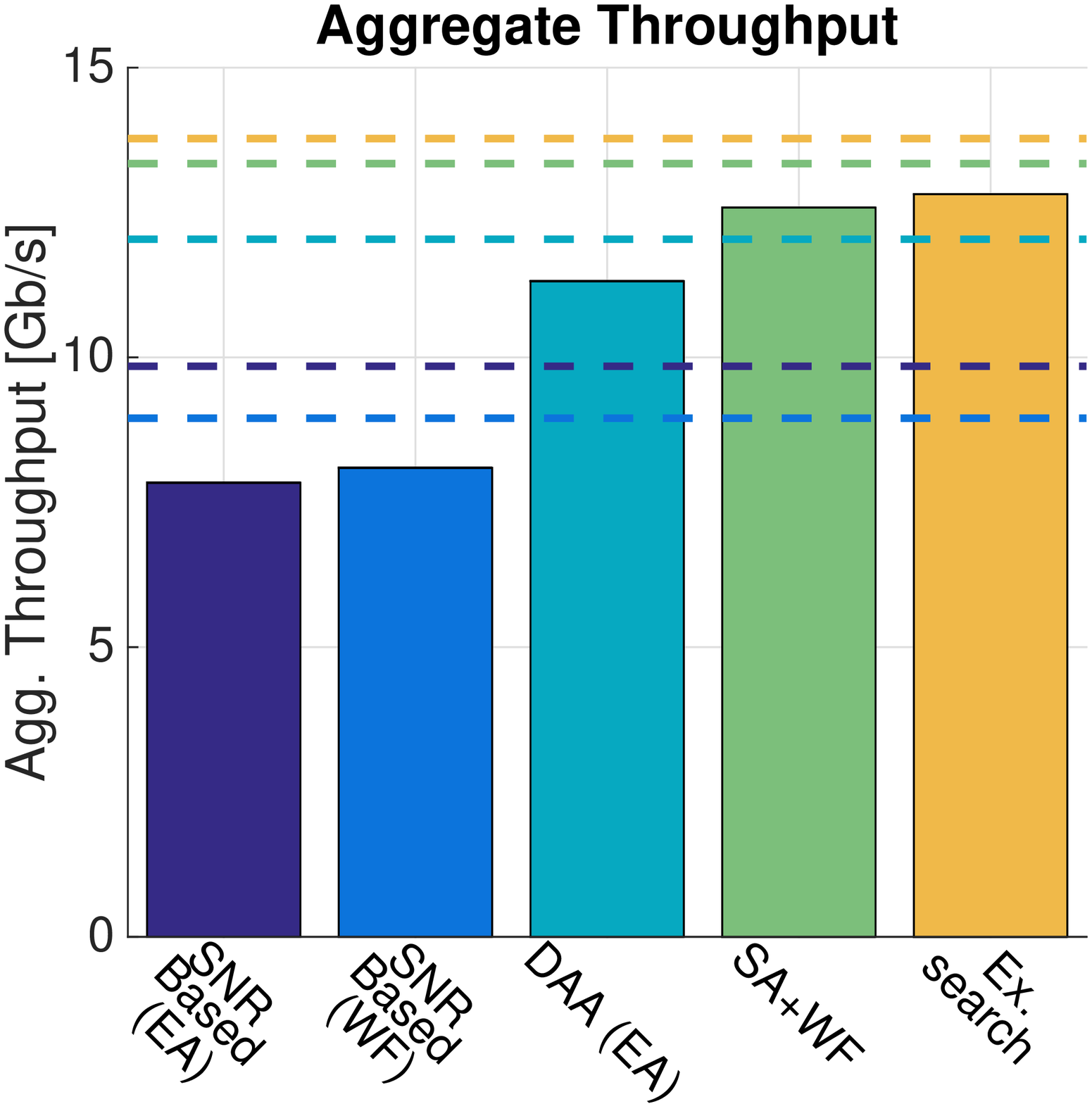}
	\caption{Aggregate throughput comparison.}% in the network deployment shown in Fig.~\ref{fig:LoadDAA}, when association is enforced by SNR-based policies, DAA \cite{Athanasiou:2015}, and our proposal (SA-WF). Simulation Results.}
	\label{fig:agg_tp_load_4ap10sta}
	\end{subfigure}

	\caption{Enterprise mmWave network with 4 APs positioned on a grid
	and 10 client stations deployed using the pmf reported in
	Fig.~\ref{fig:pmf4ap10sta}.  Client throughput performance attained with
	DAA \cite{Athanasiou:2015}, SNR-based policies, the proposed simulated
	annealing and water filling (SA-WF) based association control solution,
	and respectively exhaustive search. Clients have heterogeneous offered
	loads between 0.4--2.3Gb/s (finite load).  Theoretical maximum shown with dashed lines. Simulation results.}

\label{fig:LoadPerf_4ap10sta}
% \vspace*{-1.25em}
\end{figure*}

\begin{figure*}[!ht]
\begin{adjustwidth}{-1cm}{-1cm}
%\vspace*{-0.5em}
\centering
	\begin{subfigure}[t]{0.45\textwidth}
	\centering
	\includegraphics[width=\textwidth]{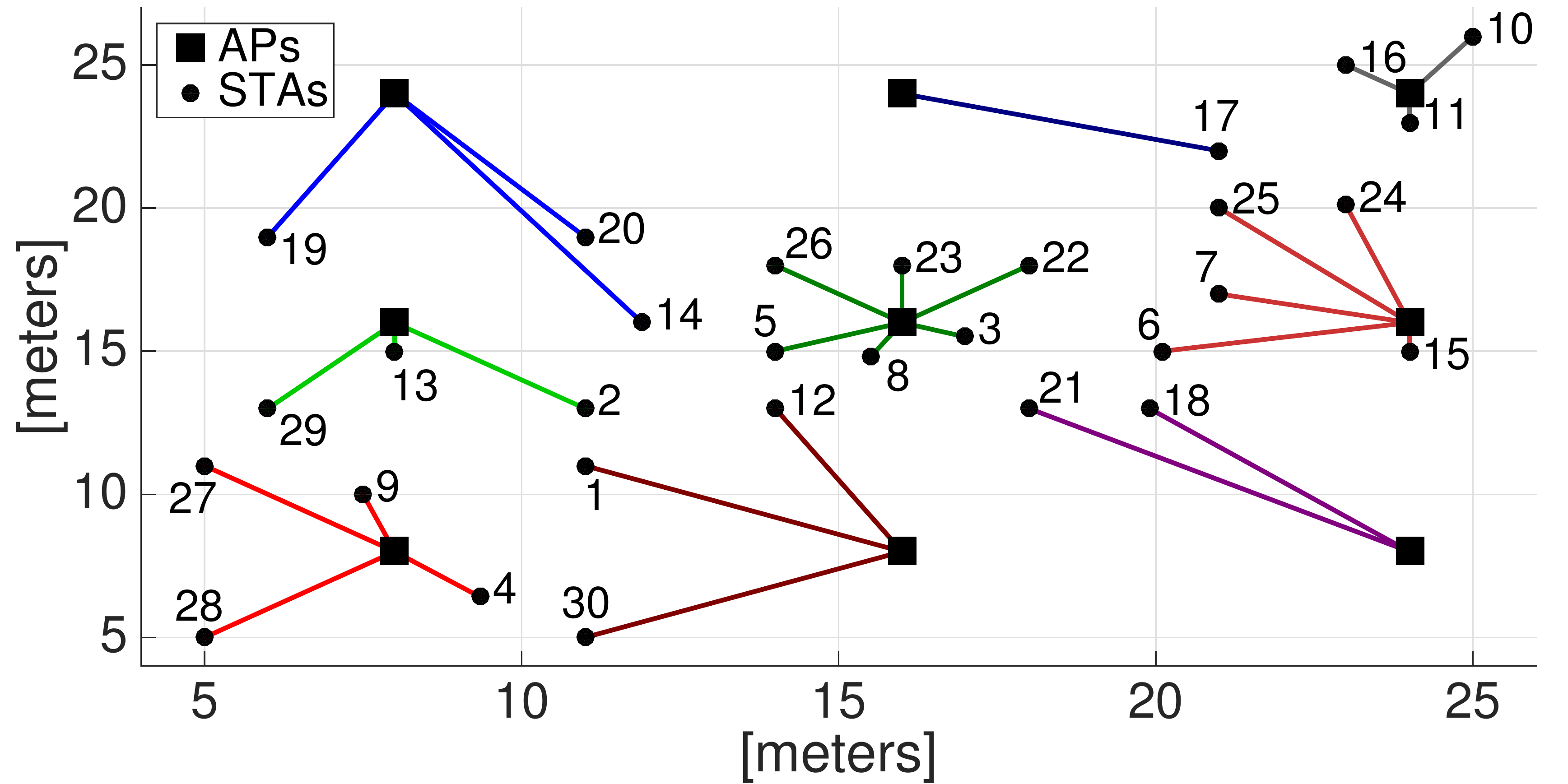}
	\caption{Client--AP links enforced by the DAA algorithm \cite{Athanasiou:2015}.}
	\label{fig:LoadDAA}
	\end{subfigure}
	\hspace*{1em}
	\begin{subfigure}[t]{0.45\textwidth}
	\centering
	\includegraphics[width=\textwidth]{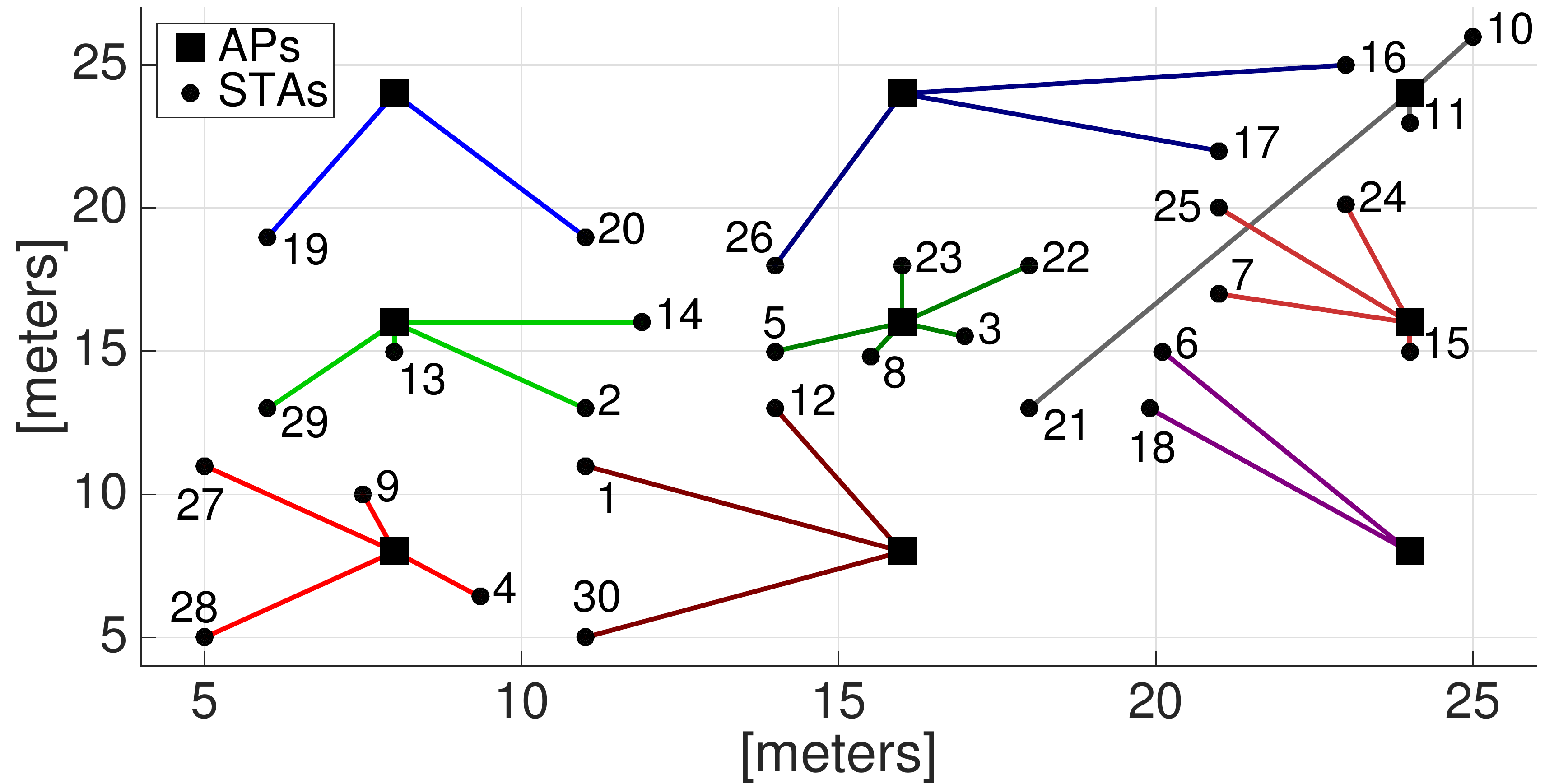}
	\caption{Client--AP links enforced by the proposed SA-WF scheme.}
	\label{fig:LoadAssoc}
	\end{subfigure}
\\	
	\begin{subfigure}[t]{.87\textwidth}
	\centering
	\includegraphics[width=\textwidth]{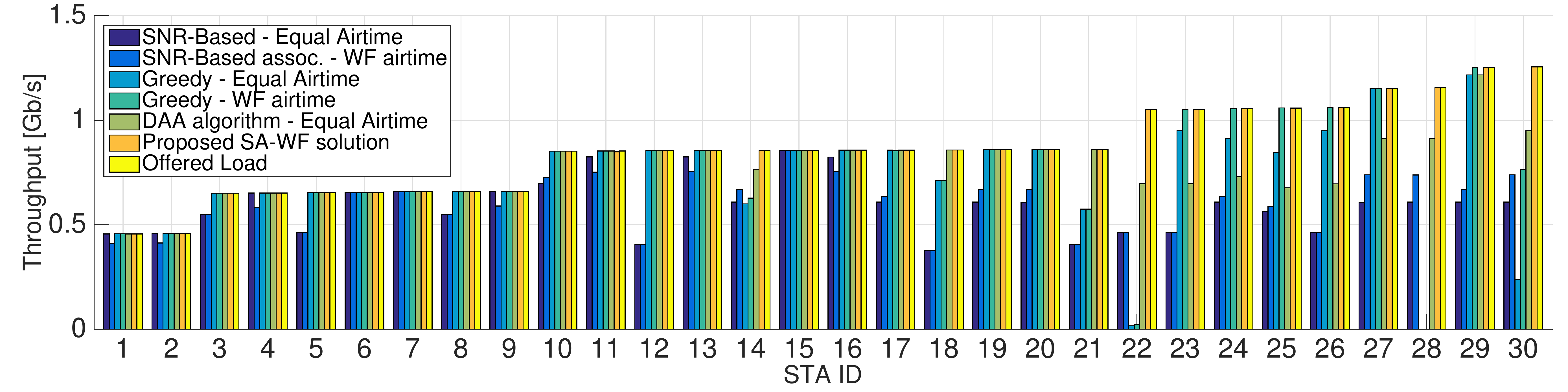}
	\caption{\rev{Offered loads and throughputs with SNR-based, greedy, and DAA approaches, and respectively the proposed solution.}}
	\label{fig:LoadIndividual}
	\end{subfigure}
	\begin{subfigure}[t]{0.23\textwidth}
	\centering
	\includegraphics[width=\columnwidth]{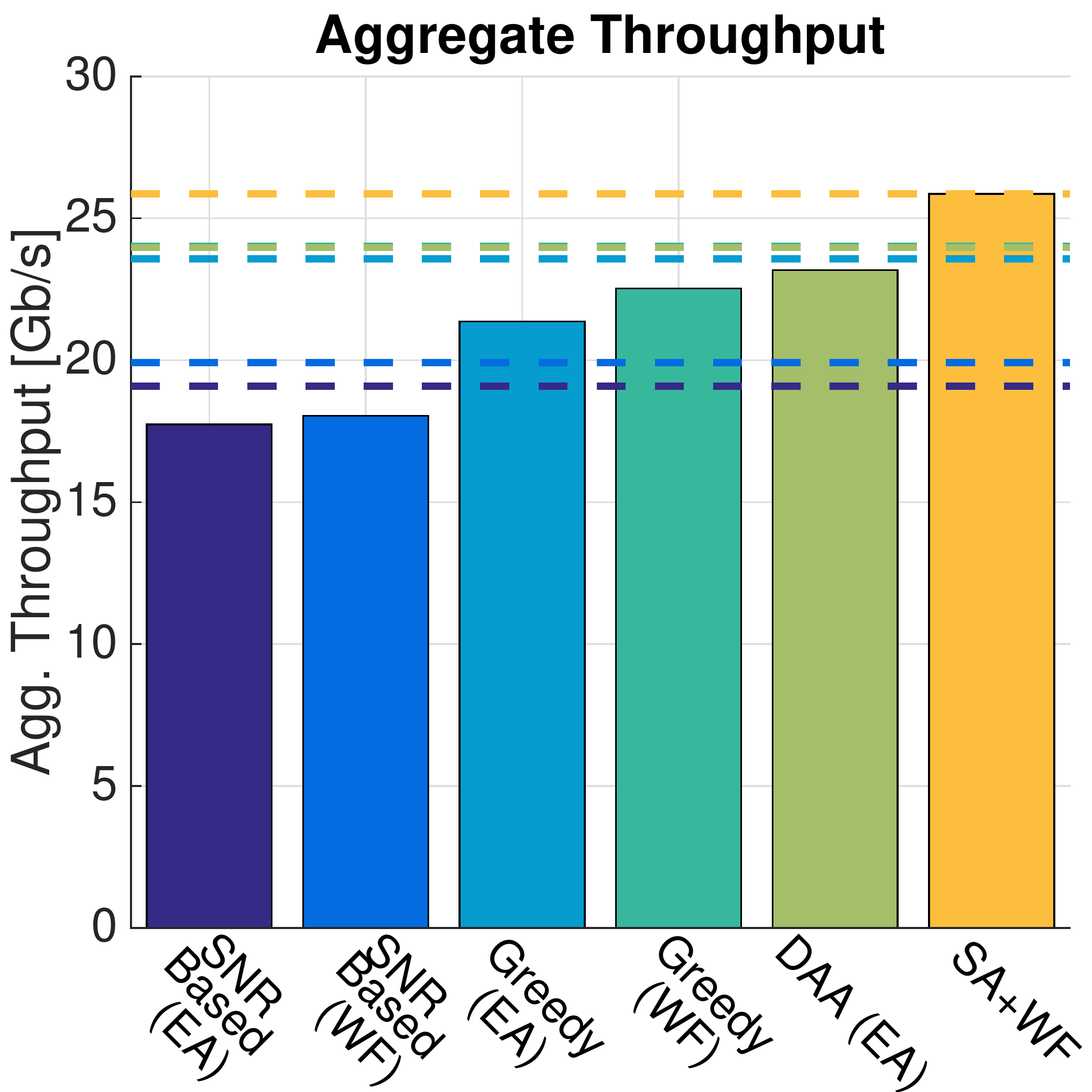}
	\caption{\rev{Aggregate throughput comparison.}}% in the network deployment shown in Fig.~\ref{fig:LoadDAA}, when association is enforced by SNR-based policies, DAA \cite{Athanasiou:2015}, and our proposal (SA-WF). Simulation Results.}
	\label{fig:agg_tp_load}
	\end{subfigure}

\caption{Enterprise mmWave network with 9 APs positioned on a grid and 30 client
	stations deployed randomly. Client associations enforced by and
	throughput performance attained with DAA \cite{Athanasiou:2015},
	SNR-based policies, and respectively the proposed simulated annealing
	and water filling (SA-WF) based association control solution. \rev{Also shown is the performance with the greedy association approach working with equal time allocation (EA) and with water-filling (WF).} Clients
	have heterogeneous offered loads between 0.5--1.25Gb/s (finite load). Theoretical maximum shown with dashed lines.
	Simulation results.}

\label{fig:LoadPerf}
% \vspace*{-1.25em}
\end{adjustwidth}
\end{figure*}

We illustrate the results of this experiments if Figs.~\ref{fig:LoadPerf_4ap10sta} and~\ref{fig:LoadPerf}, where we also show with yellow bars the offered load of each station. In the second case, we also depict the client associations enforced by our proposal (Fig.~\ref{fig:LoadAssoc}) and the DAA scheme (Fig.~\ref{fig:LoadDAA}). First, note that also under finite load conditions, our solution largely performs very close to the 
optimum obtained through exhaustive search (Fig.~\ref{fig:LoadIndividual_4ap10sta}), the difference in individual between the two being notable only at stations \#10 (21\%). Overall, our solution yield a 0.0004\% smaller network utility, which corresponds to an aggregate throughput loss of 1.8\% (hardly appreciable in Fig.~\ref{fig:agg_tp_load_4ap10sta}).
On the other hand, the aggregate offered load exceeds the resources available in the network, while DAA and the SNR-based policy perform worse that the proposed simulated annealing and water
filling based approach. In particular, observe in Fig.~\ref{fig:LoadIndividual_4ap10sta} that with our scheme almost all the clients
attain a superior throughput performance, namely up to 96\% and 27\% higher, as compared to the
SNR-based policy and DAA. Overall, we attain network utility gains up to 2.1\%,
corresponding to aggregate throughput gains of 11--60\%, as illustrated in
Fig.~\ref{fig:agg_tp_load_4ap10sta}.

Examining now the 9 AP topology, we note that DAA works distributively and thus manages to balance well the number of
clients across different APs, as seen in Fig.~\ref{fig:LoadDAA}. However, the
underlying assumption in this approach is that APs will always be able to
accommodate any traffic demand, which is impractical, while airtime allocation
at each AP is not considered explicitly. Consequently, although the network has
sufficient resources to accommodate all demands in this scenario, some stations only receive a
fraction of offered load with this approach (Fig.~\ref{fig:LoadIndividual}). In particular, as the
offered load increases, DAA largely accommodates only $\sim\sfrac{2}{3}$ of the
individual demands (stations 22--30).
The drawback of not explicitly accounting airtime at each AP is more obvious at the AP located in the bottom left corner. Even if the AP serves the same
subset of clients with both the DAA algorithm (Fig.~\ref{fig:LoadDAA})
and the proposed SA--WF scheme (Fig.~\ref{fig:LoadAssoc}), two of these clients (27 and 28)
experience superior performance with our proposal. This is because our
water filling procedure (Algorithm~\ref{alg:watfil}) takes into consideration
the actual offered loads when allocating airtime to each associated client; in contrast, simply allocating equal airtime to each
client with DAA associations proves sub-optimal. \rev{For completeness, we also compare the performance of our solution against the greedy approach, which maintains the same associations as shown in Fig.~\ref{fig:GreedyAssoc}. We consider this performs equal airtime allocation or is combined with the proposed water-filling scheme. Note that neither of the two match the performance of our solution or that of DAA. We remark that a greedy association of clients is only marginally better than an SNR-based approach under finite load conditions.}

\begin{figure*}[t]
%\vspace*{-0.5em}
\centering
	\begin{subfigure}[t]{0.25\textwidth}
	\centering
	\includegraphics[width=\columnwidth]{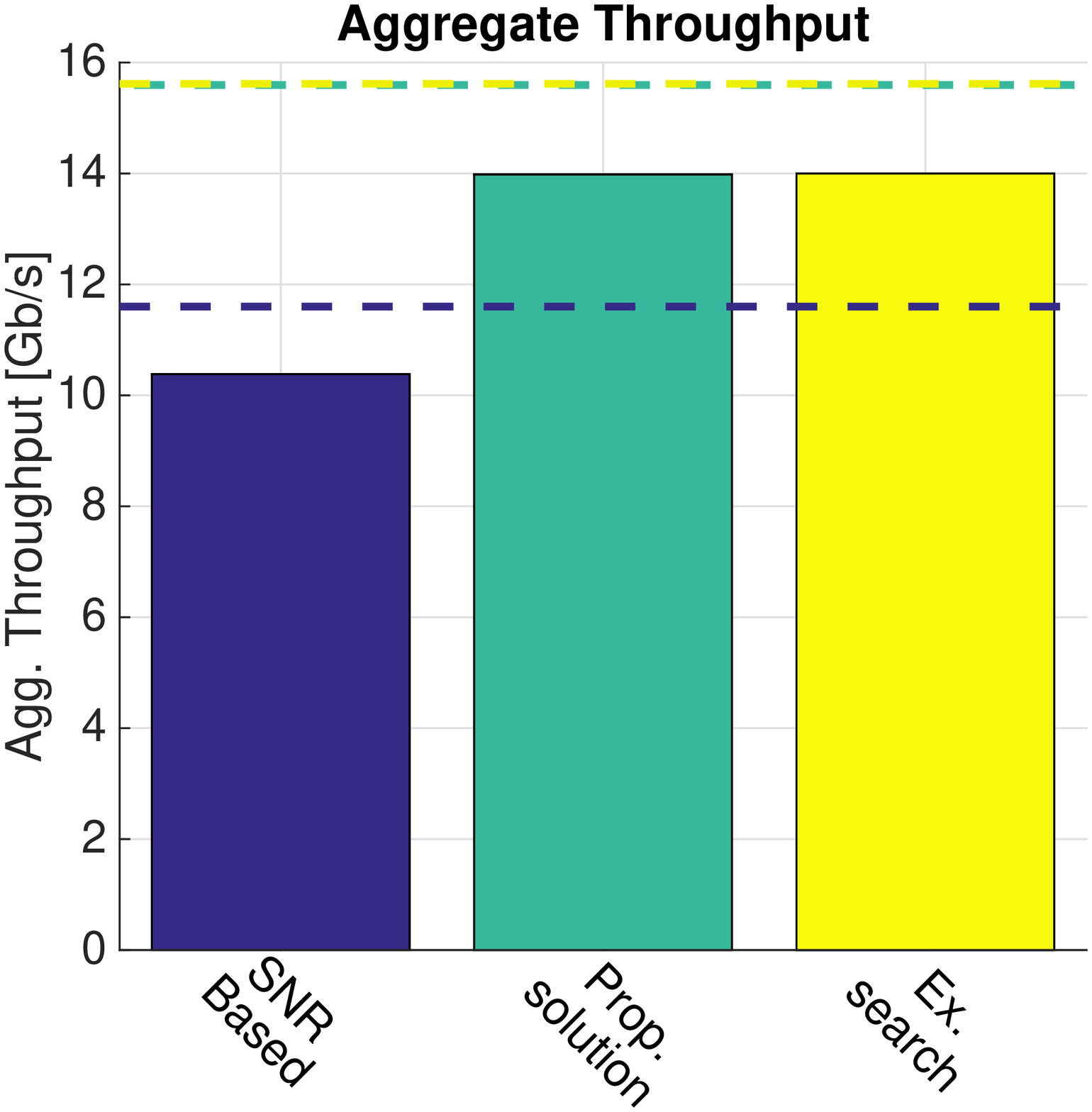}
	\caption{Aggregate throughput comparison.}
	\label{fig:agg_tp_sat_mm}
	\end{subfigure}
	%\hspace*{0.5em}
	\begin{subfigure}[t]{0.74\textwidth}
	\includegraphics[width=\textwidth]{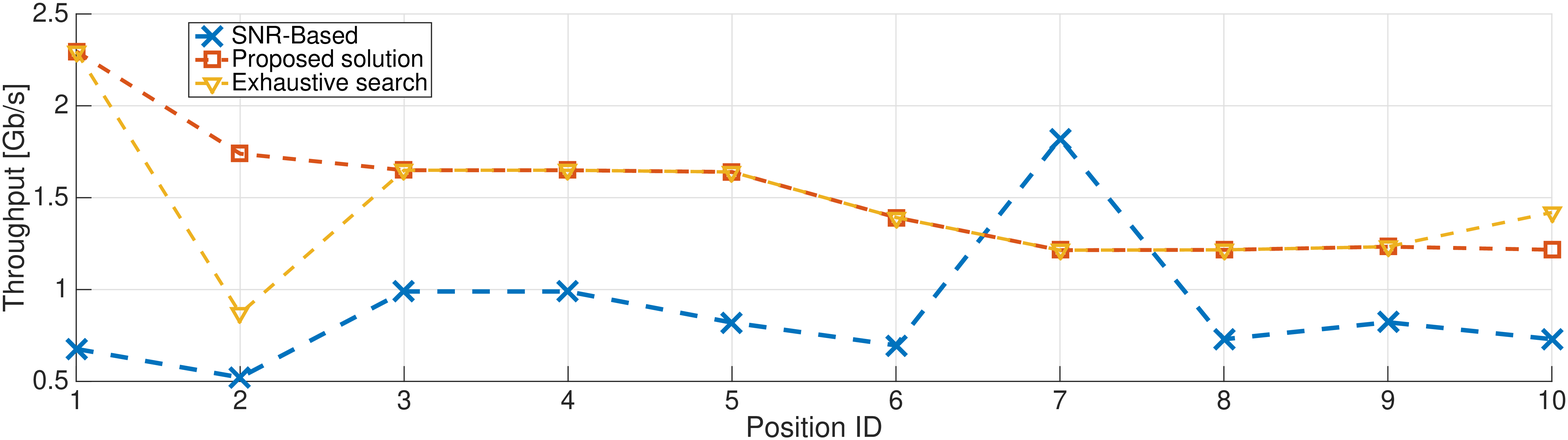}
	\caption{Throughput of a single mobile station at 10 different locations.}
	\label{fig:SatIndividual_mm}
	\end{subfigure}

	\caption{Enterprise mmWave network with 4 APs positioned on a grid and
	10 client stations moving following a random waypoint mobility model.
	Client throughput performance attained with the	SNR-based association
	policy, the proposed solution, and the optimum obtained via exhaustive
	search. All clients are backlogged (saturation conditions) and equal
	airtime allocation is performed at each AP. Theoretical maximum shown with dashed lines. Simulation results.}

\label{fig:SatPerf_mm}
% \vspace*{-1.25em}
\end{figure*}

\begin{figure*}[t]
%\vspace*{-0.5em}
\centering
	\begin{subfigure}[t]{0.25\textwidth}
	\centering
	\includegraphics[width=\columnwidth]{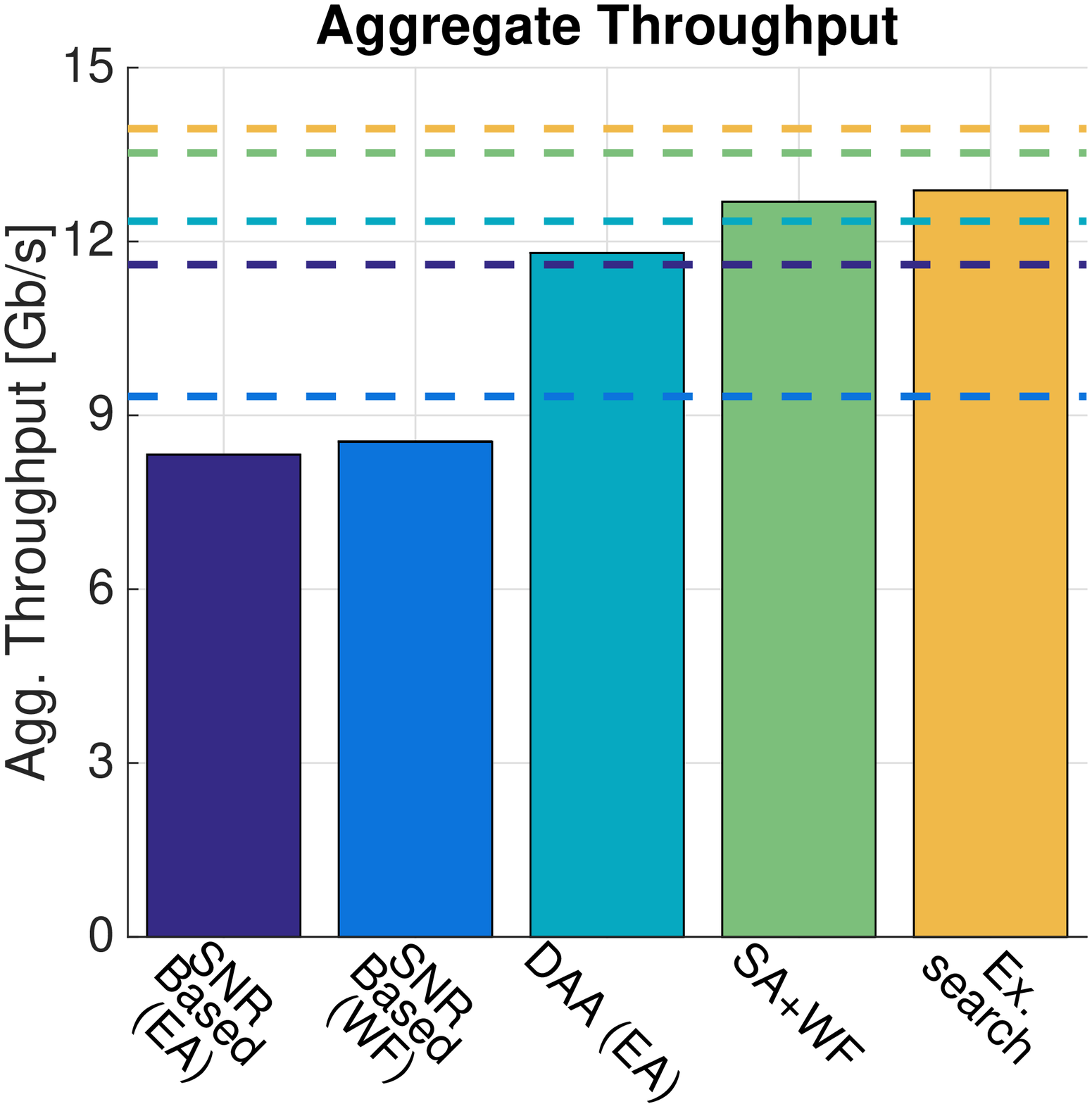}
	\caption{Aggregate throughput comparison.}
	\label{fig:agg_tp_load_mm}
	\end{subfigure}
	\begin{subfigure}[t]{0.74\textwidth}
	\includegraphics[width=\textwidth]{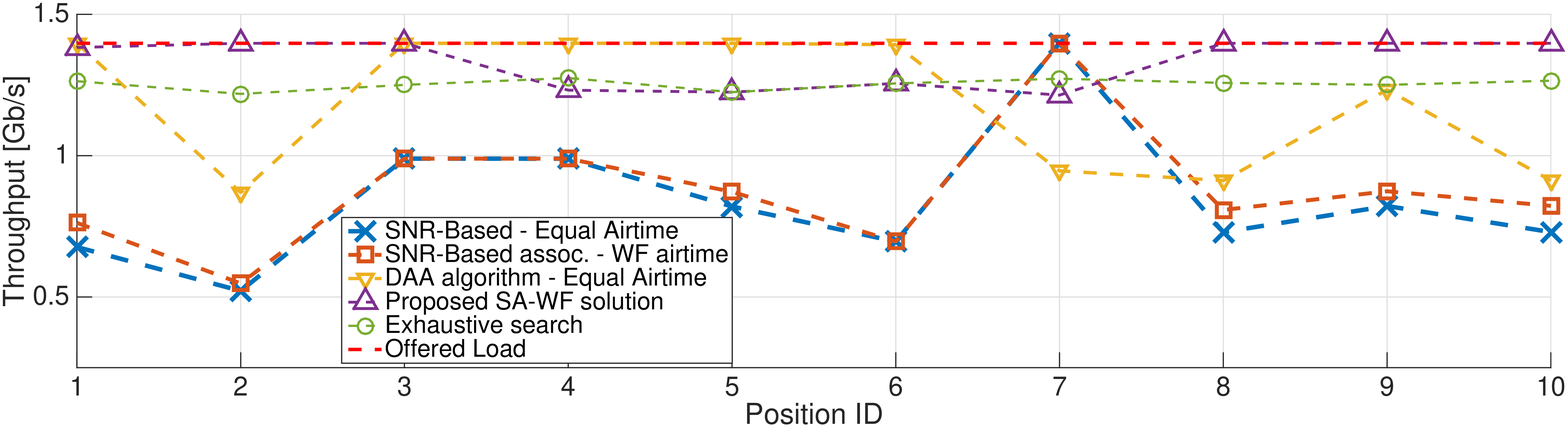}
	\caption{Offered loads and throughputs for a single station at 10 different locations.}
	\label{fig:LoadIndividual_mm}
	\end{subfigure}

	\caption{Enterprise mmWave network with 4 APs positioned on a grid
	and 10 client stations moving using a random waypoint mobility model. Client throughput performance attained with
	DAA \cite{Athanasiou:2015}, SNR-based policy, the proposed simulated
	annealing and water filling (SA-WF) based association control solution,
	and the optimum obtained via exhaustive search. Clients have heterogeneous offered
	loads between 0.4--2.3Gb/s (finite load). Theoretical maximum shown with dashed lines.  Simulation results.}

\label{fig:LoadPerf_mm}
% \vspace*{-1.25em}
\end{figure*}

The simulated annealing and water filling based solution we propose
obtains association and airtime allocation matrices that \textbf{successfully accommodate the demand of all stations (33\% more than
DAA)}. %, as shown in Fig.~\ref{fig:LoadIndividual}.
Overall, our proposal
increases network utility attainable with SNR-based policy and DAA by up to 2\%,
which effectively translates into an aggregate throughput gain of 11--32\%, as
shown in Fig.~\ref{fig:agg_tp_load}.

\subsection{User Mobility}
\label{sec:mobmod}

In the following we evaluate the performance of the proposed methods when users
move in the coverage area and either are backlogged or have finite demands.  To
this end, we work with a network deployment comprising 4 APs and 10 client
stations. The APs are positioned again as in Fig.~\ref{fig:pmf4ap10sta}, while
clients' positions evolve over time according to a random waypoint mobility
model. More specifically, we start by randomly deploying the stations in the
$[7, 23]$m$\times [7, 16]$m area, then simulate 100 seconds of user movements
with velocities randomly distributed between 0.2m/s and 2.2m/s, pause intervals
between movements uniformly distributed between 1s and 20s, and walk times
uniformly distributed between 1s and 5s.  We take ten snapshots of the client
positions obtained with this mobility model (one every 10 seconds) and measure
the user throughputs attained at each of these positions. We illustrate the
results of this experiments in Figs.~\ref{fig:SatPerf_mm} and
~\ref{fig:LoadPerf_mm}, where we plot the throughput of one station as a
function of the position, as well as the average of the total (aggregate)
throughput over all positions, in backlogged and respectively finite load
scenarios.

We observe that, similar to the static scenarios, when stations are backlogged
our proposal attains a 1.5\% network utility gain, which translates into a 35\%
aggregate throughput gain. Also in this case, our proposal is very close to the
optimal solution, as the difference is network utility is only 0.001\% and the
aggregate throughput only 0.0012\% lower with the proposed scheme. Further,
Fig.~\ref{fig:SatIndividual_mm} demonstrates the throughput attained by an
individual station with our approach is superior to that with the default
SNR-based policy in nine out of ten positions. It is important to note that,
although the SNR-based policy offers higher throughput for the sampled station
at position~\#7, this does not correspond to higher network-wide performance, as
confirmed by Fig.~\ref{fig:agg_tp_sat_mm}. In fact, our scheme offers 34.7\%
higher aggregate throughput, as compared to the SNR-based association policy.

We now examine a finite load condition scenario and compare the performance of
the proposed simulated annealing and water filling based approach, as well as
all other aforementioned schemes. Once again, our proposal attains superior
results as compared to SNR-based and DAA mechanisms, while its performance is
very close to the absolute optimal solution, as shown if
Fig.~\ref{fig:LoadPerf_mm}. In particular, the simulated annealing and water
filling mechanism attains a utility gain up to 2.8\% higher than that of
SNR-based and DAA approaches, which translates into an aggregate throughput gain
of 38--52\% (Fig.~\ref{fig:agg_tp_load_mm}). As compared to the optimal solution
obtained through exhaustive search the network utility loss is limited to
0.003\%, corresponding to only 0.012\% lower aggregate throughput. Taking a
closer look at the throughput of a single station at different locations,
Fig.~\ref{fig:LoadIndividual_mm} confirms the SNR-based policy only offers
superior performance to that of the proposed SA-WF and existing DAA scheme in
one location (position~\#7). Our solution meets the offered load at 6/10
locations and the throughput is very close to that at the other 4/10, unlike DAA
which works well in 5/10 locations, but offers significantly lower throughputs
at the other 5/10. It is also interesting to note that the optimal solution
(dashed green line) obtained through exhaustive search always under-performs in
this example.
This is easily explained by the fact that, in this particular example, the optimal solution
reduces the performance of the sampled station, while improving the performance of other stations, in order to maximise the network utility.
This is confirmed by the results we report in Fig.~\ref{fig:agg_tp_load_mm}, where
we observe the solution found through exhaustive search obtains the highest aggregate
throughput.

\subsection{\rev{Impact of Obstacles}}

\rev{In this subsection we study the impact of obstacles on the association derived and throughput obtained with the proposed simulated annealing and water-filling based scheme, as well as with the SNR-based and DAA benchmarks. We consider a similar office environment which is now partitioned with walls. 9 APs are placed again on a grid and clients are randomly deployed, lying in different parts of the layout as shown in Fig.~\ref{fig:obstsnr}. In this setting, links between clients and different APs are subject to obstacles. We consider finite heterogeneous load conditions (0.5--1.25Gb/s) and report the behaviour of all approaches in Fig.~\ref{fig:obstacles}. Observe that the attenuation due to obstacles impacts on the association decision of all schemes, making association to the APs placed in the bottom-right part of the layout particularly problematic. This is indeed observable by comparing Figs.~\ref{fig:obstsnr}--\ref{fig:obstsiman} with Figs.~\ref{fig:SNRassoc}, \ref{fig:LoadDAA}, and~\ref{fig:LoadAssoc}. As a result the offered load of fewer clients can be satisfied, which results in overall lower aggregate throughput for all approaches. Nonetheless, the total throughput attained by the proposed solution exceeds that offered by the benchmarks considered, as seen in Fig.~\ref{fig:agg_tp_obst}.}

\begin{figure*}
\begin{adjustwidth}{-1cm}{-1cm}
%\hspace*{-1.5in}
\centering
	\begin{subfigure}[t]{0.35\textwidth}
	\centering
	\includegraphics[width=\textwidth]{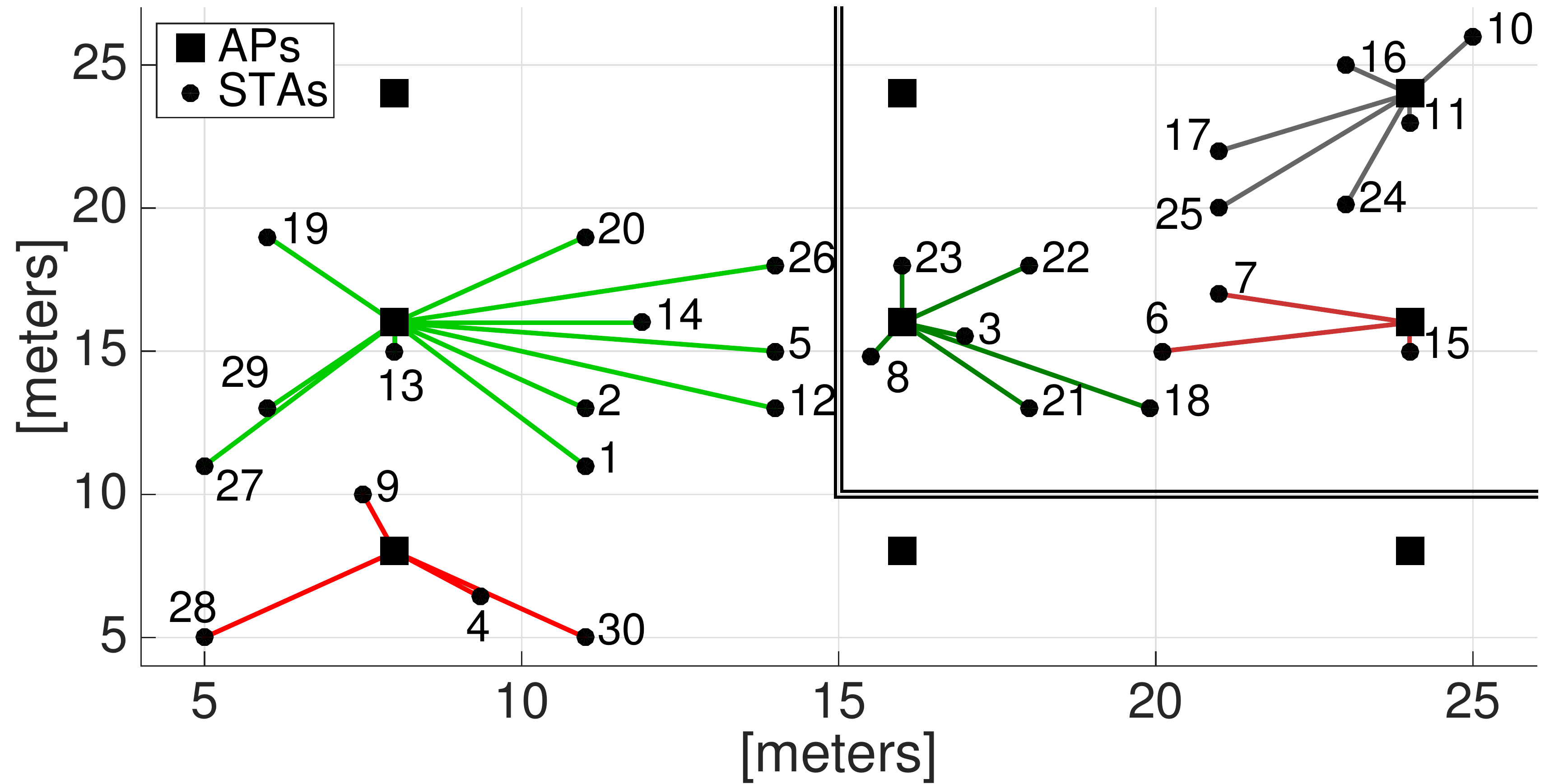}
	\caption{\rev{Client--AP links established using highest SNR policy.}}
	\label{fig:obstsnr}
	\end{subfigure}
	\hspace*{0.45em}
	\begin{subfigure}[t]{0.35\textwidth}
	\centering
	\includegraphics[width=\textwidth]{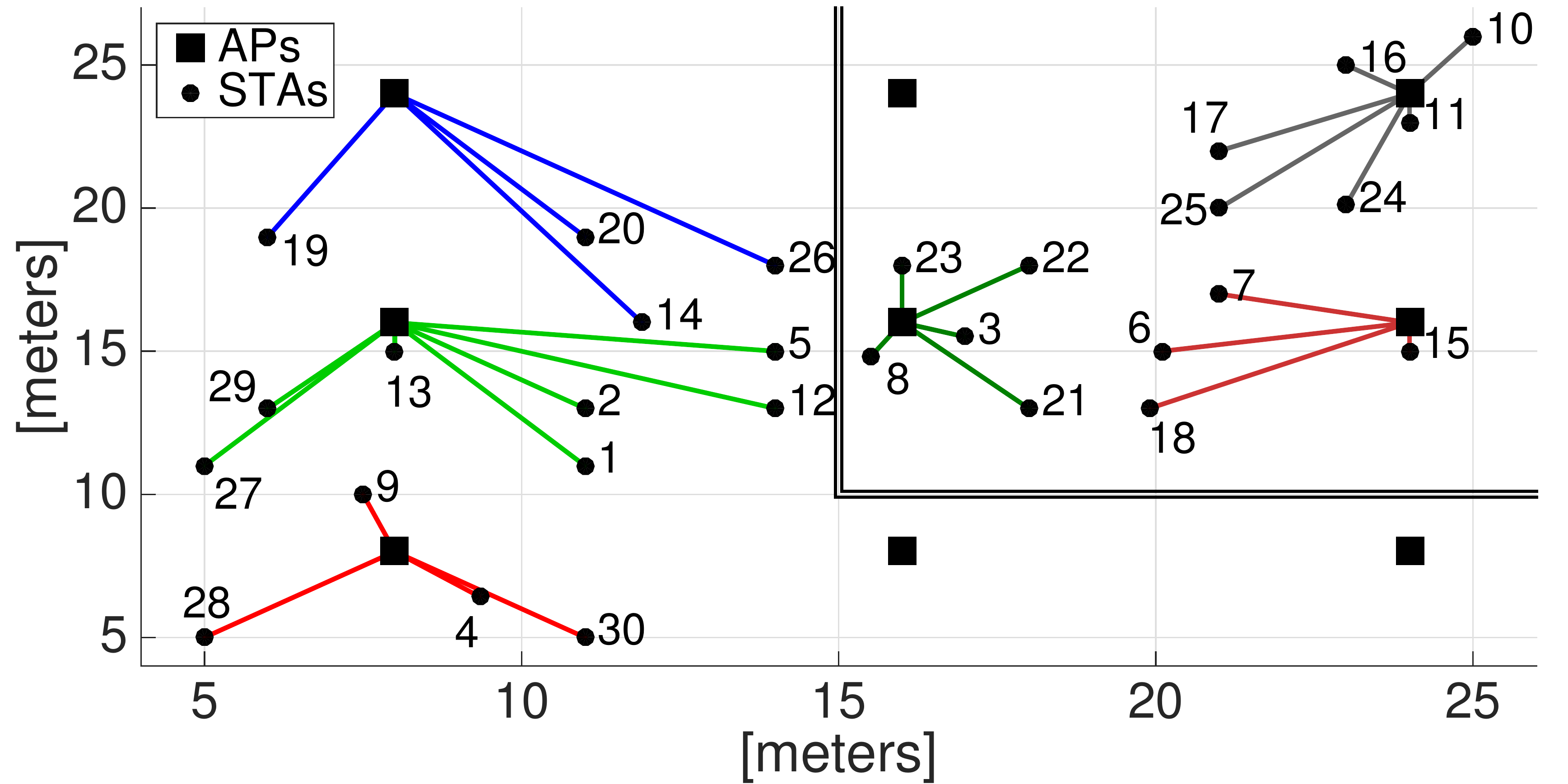}
	\caption{\rev{Client--AP links obtained with the DAA algorithm.}}
	\label{fig:obstdaa}
	\end{subfigure}	
	\hspace*{0.45em}
	\begin{subfigure}[t]{0.35\textwidth}
	\centering
	\includegraphics[width=\textwidth]{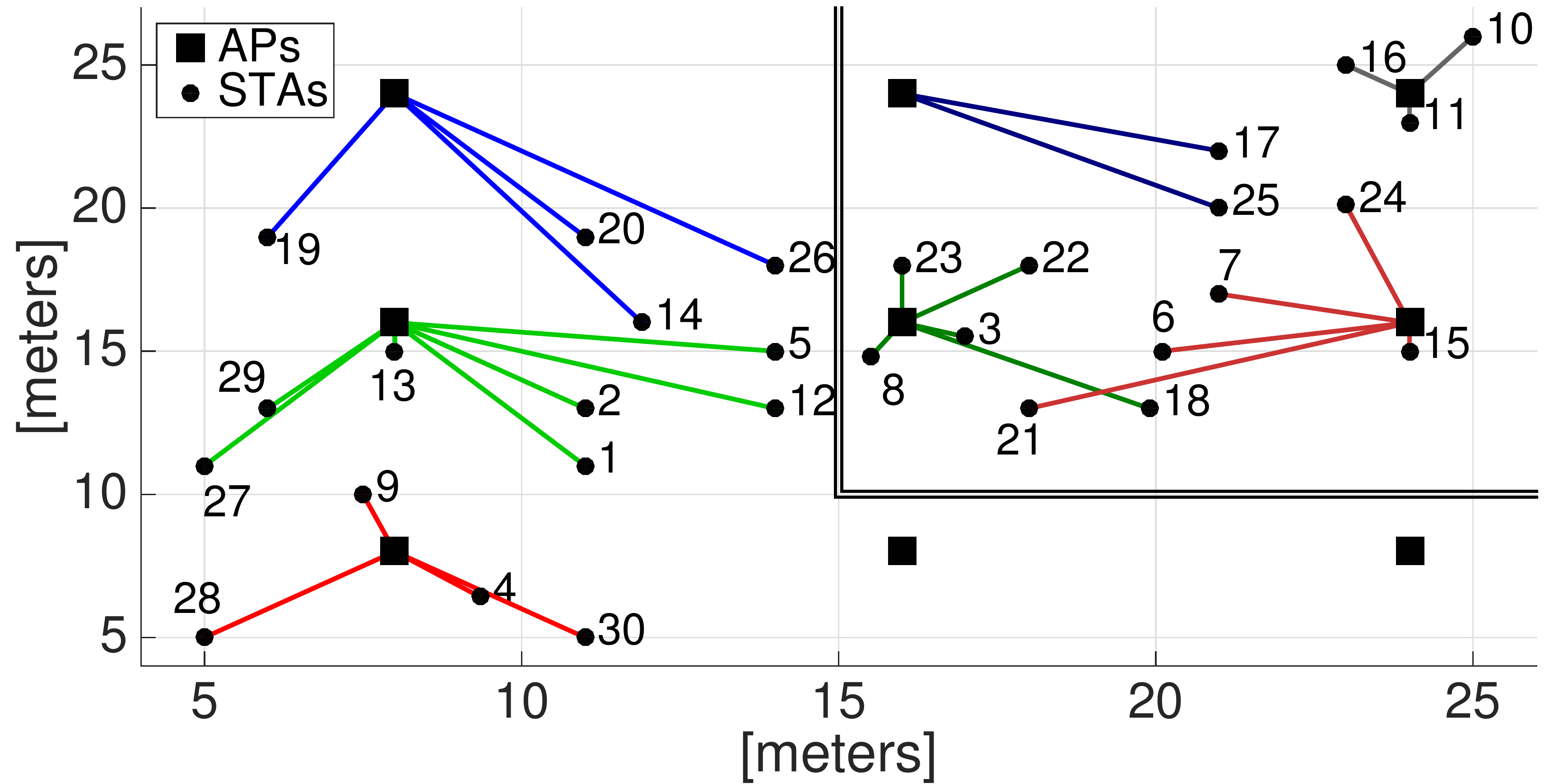}
	\caption{\rev{Client--AP links enforced by the proposed association algorithm.}}
	\label{fig:obstsiman}
	\end{subfigure}	
%\vspace*{1em}
	\begin{subfigure}[t]{.85\textwidth}
	\centering
	\includegraphics[width=\textwidth]{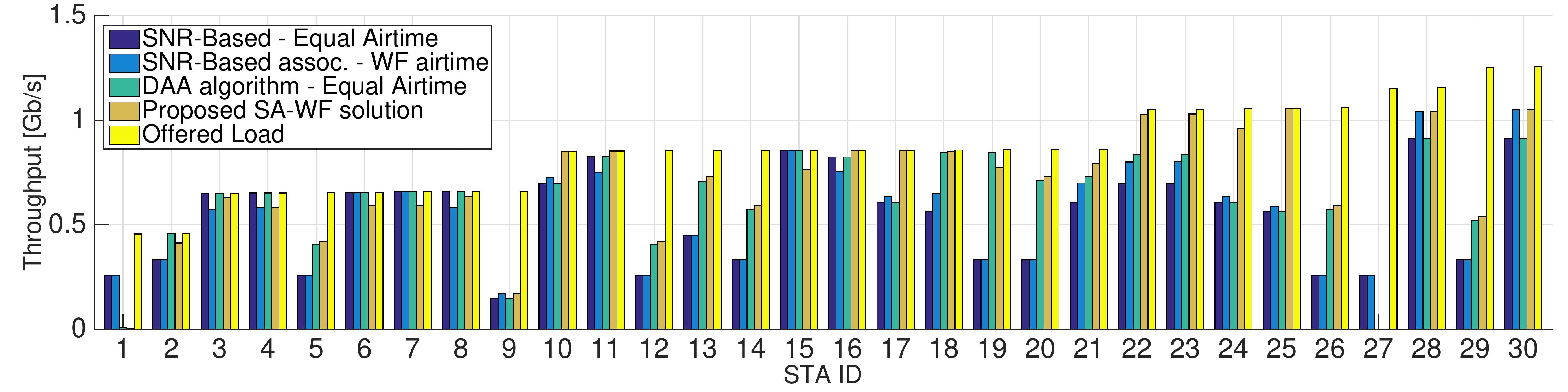}
	\caption{\rev{Individual throughputs achieved with the SNR-based, DAA, and proposed methods.}}
	\label{fig:obstperf}
	\end{subfigure}
	\begin{subfigure}[t]{0.25\textwidth}
	\includegraphics[width=\textwidth]{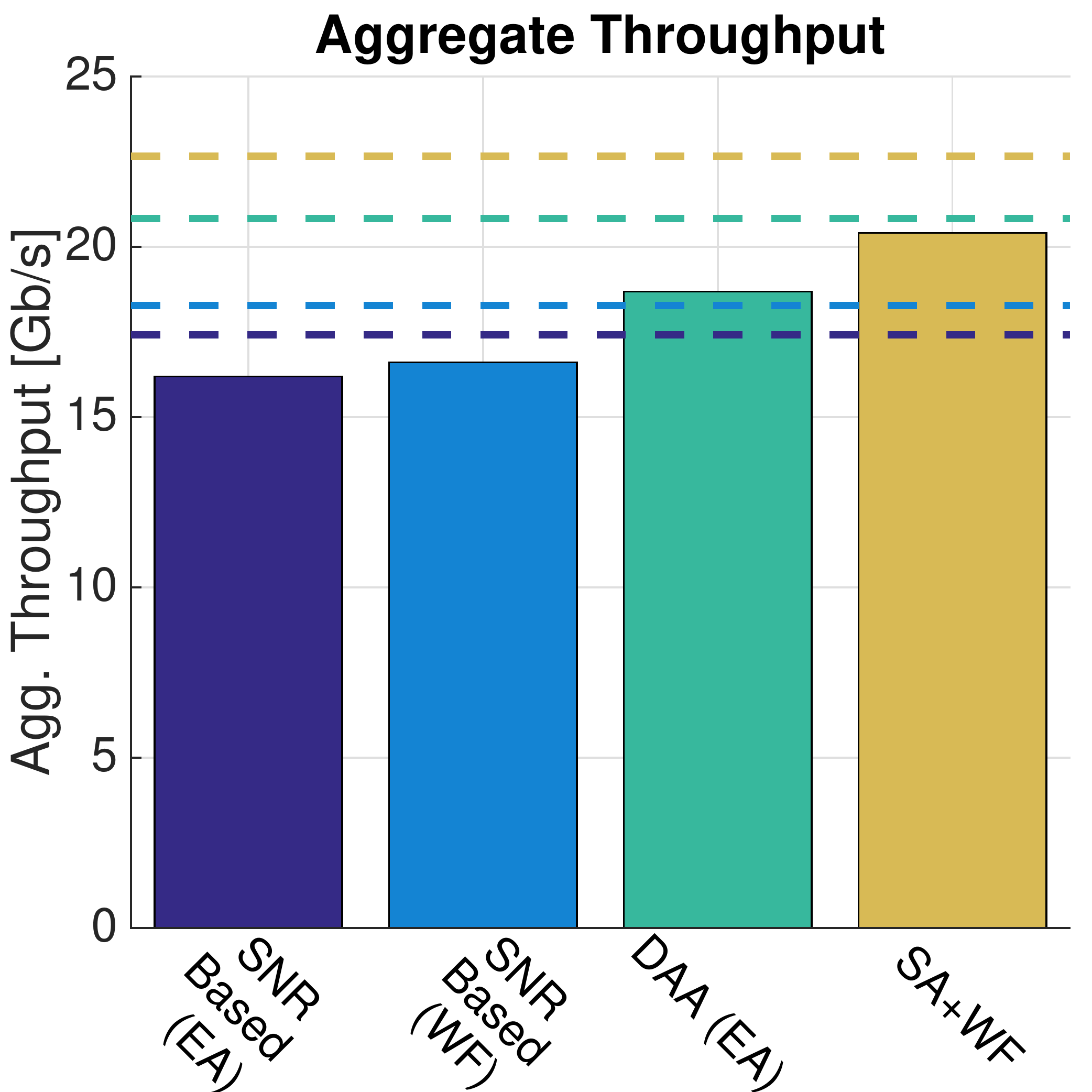}
	\caption{\rev{Aggregate throughput comparison.}}
	\label{fig:agg_tp_obst}
	\end{subfigure}

\caption{\rev{Enterprise mmWave network with 2 rooms separated by walls. 9 APs positioned on a grid and 30 client
	stations deployed randomly. Client associations enforced by and
	throughput performance attained with SNR-based, DAA, and respectively the proposed simulated annealing
	and water filling (SA-WF) based association control schemes. Clients
	have heterogeneous offered loads between 0.5--1.25Gb/s (finite load). Theoretical maximum shown with dashed lines.
	Simulation results.}}

\label{fig:obstacles}
% \vspace*{-1.25em}
\end{adjustwidth}
\end{figure*}

\subsection{Runtime Performance}
\label{sec:runtime}

Finally, we demonstrate that the proposed simulated annealing algorithm finds a
solution rapidly, making it suitable for real-time operation in an enterprise
mmWave network with a central controller. To this end we first take a closer look at
the algorithm's runtime in the more complex network deployment scenario considered previously (9APs and 30 clients) and examine the
utility at each step of the exploration.

As shown in Fig.~\ref{fig:iterations}, the algorithm starts with the solution of
the saturation problem, computes the utilities of each explored candidate
solution, and in this case accepts all of them,  even if the energy is negative
(line \ref{alg:siman:keepprob} of Algorithm~\ref{alg:siman}). By this procedure
\textbf{the algorithm finds an optimum that satisfies all clients' offered loads
within only 7 iterations}. This confirms that the simulated annealing parameters
$T0, \alpha, q, Tmin$, and $p$ we use are appropriate for the problem at hand.

\rev{To add further perspective, we simulate 400 topologies with 30--45 client stations and 9APs placed on a grid as before, and measure the execution time on a PC equipped with an Intel i7 CPU running at 3.1GHz. We plot the empirical CDF of these execution times in Fig.~\ref{fig:ecdf}. As expected, the runtime grows linearly with the number of clients, yet the median ranges between 180--355ms, which confirms the practical feasibility of our approach even in very dense settings where dynamics associated with pedestrian mobility can be tracked.}

\begin{figure*}[t]
%\begin{adjustwidth}{-2.5cm}{-2.5cm}
\centering

	\begin{subfigure}[t]{0.57\textwidth}
	\includegraphics[width=\columnwidth]{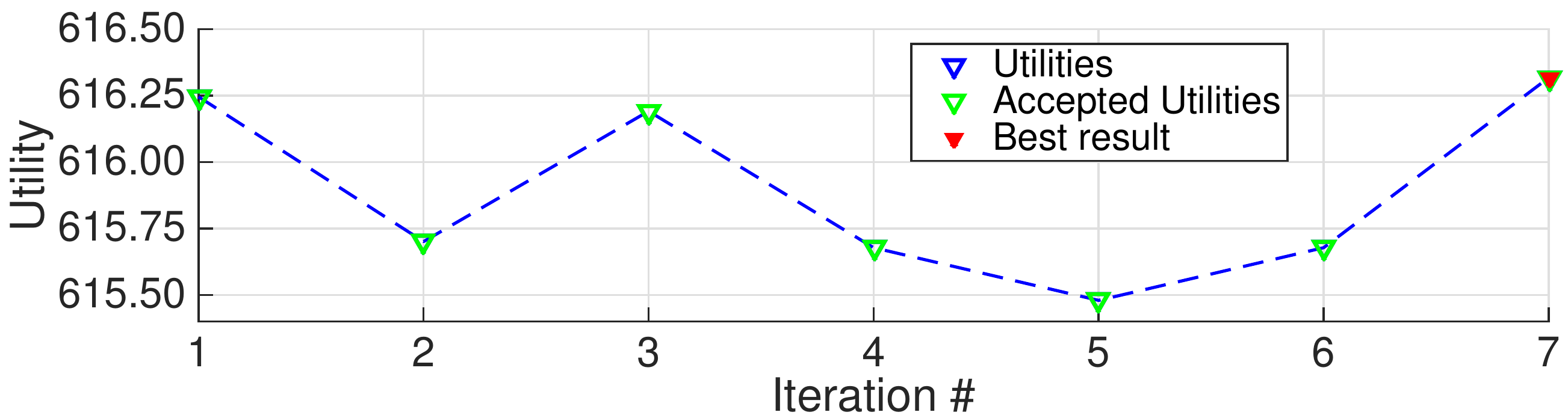}
	\caption{\rev{Number of iterations required to compute the solution for the topology shown in Fig.~\ref{fig:LoadAssoc}}}
	 \label{fig:iterations}
	\end{subfigure}
	\hspace*{0.5em}
	\begin{subfigure}[t]{0.4\textwidth}
	\includegraphics[width=\textwidth]{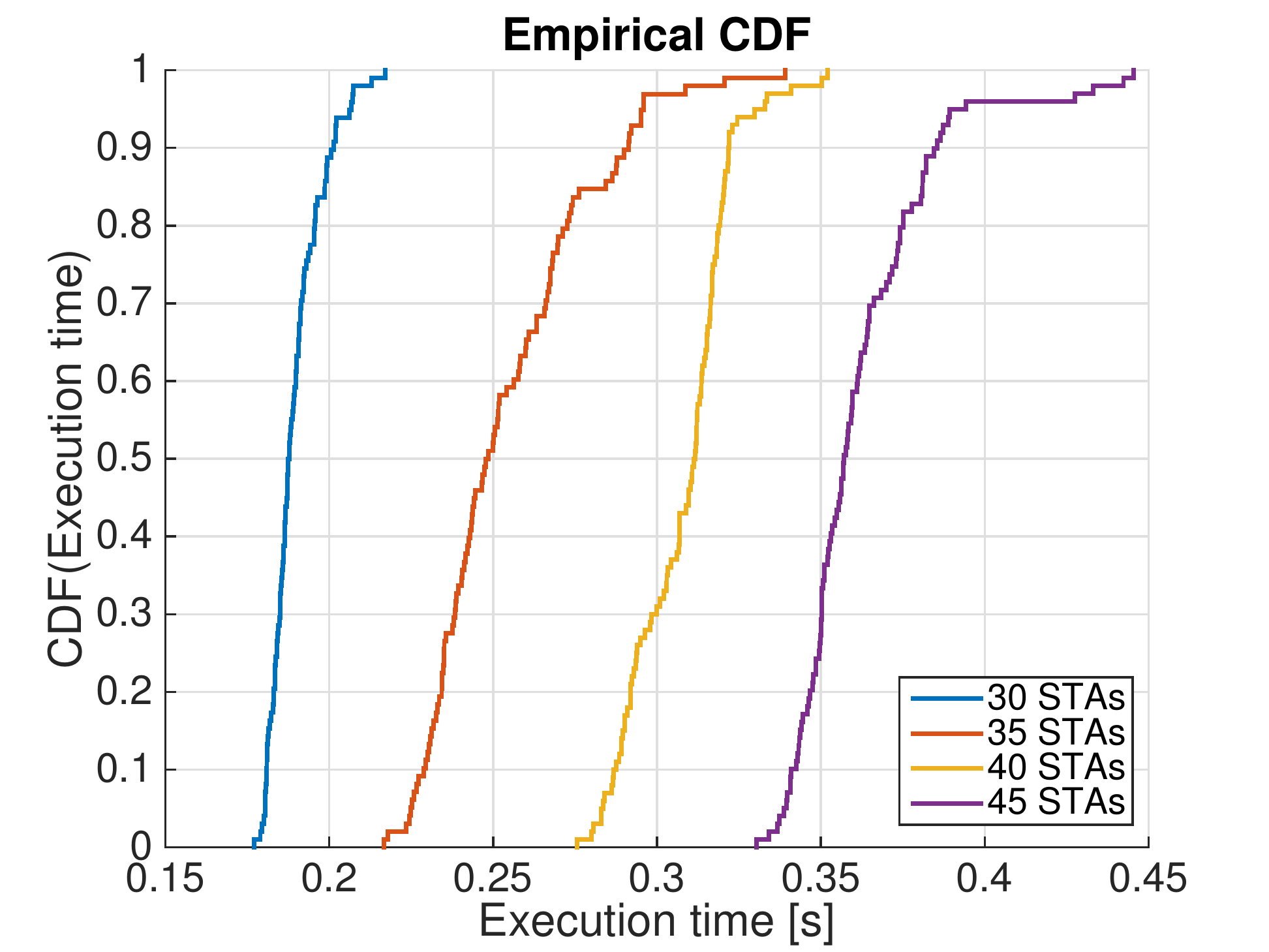}
	\caption{\rev{Empirical CDF of the execution time over 400 topologies with different sizes.}}
	 \label{fig:ecdf}
	\end{subfigure}
	
 \caption{\rev{Runtime of the proposed simulated annealing algorithm. Simulation results.}}
 \label{fig:runtime}
 %\vspace*{-0.5em}
% \end{adjustwidth}
\end{figure*}

%\section{Implementation Considerations}
%\label{sec:implementation}
%\input{implementation}

\section{Related Work}
\label{sec:related}
\noindent \textbf{60GHz Characterisation:} The feasibility of using mmWave technology for Gbps wireless connectivity has been the focus of several studies in recent years \cite{Singh:2011,Rappaport:2013,Zhu:2014,Sur:2015}. 
Singh \emph{et al.} contend that highly-directional links are feasible, but introduce terminal ``deafness'', shifting the focus from interference management to scheduling \cite{Singh:2011}. \rev{Interference regimes and the impact of mmWave base stations density are studied in~\cite{rebato2016understanding}.}
Rappaport \emph{et al.} employ prototype hardware to show steerable directional antennas work well over mmWave frequencies and have potential to support growing consumer data rates \cite{Rappaport:2013}, while extensive measurements further confirm the feasibility of 60GHz outdoor pico-cells~\cite{Zhu:2014}. %A re-configurable mmWave experimentation platform is yet to emerge. However, 
\rev{The feasibility of mmWave technology for top of the rack wireless communication in data centres was also demonstrated experimentally~\cite{flyways,Zhu:2014b}. Characterisation of indoor 802.11ad network performance, interference, and energy consumption has been undertaken in~\cite{Nitsche:2015,Saha2017}.} In addition, software-radio based studies in office environments reveal 802.11ad networks can achieve high throughput coverage beyond a single room \cite{Sur:2015}. These particularities are a key driver for the association control problem in enterprise mmWave, which we address herein.
\vspace*{0.5em}

\noindent \textbf{Association Control:} User association was studied widely in the context of both Wi-Fi \cite{Gong:2012,Athanasiou:2009,Xu:2010,Eretin:2008} and cellular systems \cite{Son:2009,Corroy:2012,Ye:2013}. In multi-cell 802.11 networks, game theoretic approaches were employed to balance load \cite{Xu:2010,Eretin:2008}, while heuristics were proposed for scenarios where legacy clients share the network with high-throughput (802.11n) stations \cite{Gong:2012}. Optimal association in wireless mesh networks is tackled using an airtime-metric based mechanism in~\cite{Athanasiou:2009}.

Son \emph{et al.} address association control in cellular networks by jointly optimising partial frequency reuse and load-balancing schemes~\cite{Son:2009}. For multi-tier cellular deployments, a theoretical cell association framework is introduced in \cite{Corroy:2012} and upper bounds on the achievable sum rate are derived. Ye \emph{et al.} combine utility maximisation and simple biasing approaches, to achieve user association with load balancing goals \cite{Ye:2013}. 

Research into client association in 60GHz networks is sparse. Athansiou \emph{et al.} address this problem from a load balancing perspective \cite{Athanasiou:2015}, though assume APs can always accommodate the demand of all clients, which is impractical, and pursue minimisation of maximum AP utilisation. They do not address distribution of APs' resource among clients.

Unlike previous work, we attack client association in enterprise mmWave networks as a utility maximisation problem under both backlogged and finite load scenarios, and heterogeneous link rates. We give low complexity algorithms that achieve close to optimal performance, while ensuring fair airtime allocation at each AP. Our schemes 
are amendable to deployment on emerging SDN enabled infrastructure supporting 802.11v/k management amendments.

\section{Conclusions}
\label{sec:conclusion}
In this paper we tackled network utility maximisation in high-end mmWave networks, capturing distinctive terminal deafness and user demand constraints, as well as dissimilar link qualities. Despite inherent NP-completeness, for backlogged conditions we solved a relaxed version of the problem and gave a low-complexity rounding algorithm that attains near-optimal performance. For finite load scenarios, we proposed a mechanism that combines simulated annealing and water filling techniques to find both the optimal association matrix and airtime allocation vector. Using an \mbox{NS-3} simulation tool we developed, we undertook a comprehensive evaluation campaign and showed that our solutions attain 60\% higher throughput as compared to the standard's default SNR-based policy, whilst accommodating the demand of 33\% more clients, as compared to recently proposed distributed association algorithms for mmWave networks.

%As mmWave hardware suitable for large scale experimentation will become available, we plan to validate our schemes in a real-life test bed, experimenting with bandwidth intensive and low latency applications, while shedding light on the implications of beam training and message control overhead on network performance. At this stage, we give an overview of key practical aspects to be considered for deploying our solution on SDN-enabled infrastructure that supports the latest 802.11 management amendments.

\section*{Acknowledgements}
The work of F. Gringoli was partly supported by the European Commission (EC) in the framework of the H2020-ICT-2014-1 project WiSHFUL (Grant agreement no. 645274).

%\vspace*{-0.5em}
%\section*{References}
{
\small
\bibliographystyle{elsarticle-num}

}

\end{document}